\newtheorem{definition}{Definition}
\newtheorem{theorem}{Theorem}
\newtheorem{corollary}{Corollary}
\begin{document}

\title{Strategyproof Mechanisms for One-Dimensional Hybrid and Obnoxious Facility Location Models}

\date{}

\author{
Itai Feigenbaum
\thanks{IEOR Department, Columbia University, New York,
NY;
{\tt itai@ieor.columbia.edu}} \and
Jay Sethuraman
\thanks{IEOR Department, Columbia University, New York,
NY;
{\tt jay@ieor.columbia.edu}. Research supported by NSF grant
CMMI-0916453 and CMMI-1201045.}
}

\maketitle

\begin{abstract}
We consider a strategic variant of the facility location problem on a closed interval. There are $n$ agents spread on that interval, divided into two types: type $1$ agents, who wish for the facility to be as far from them as possible, and type $2$ agents, who wish for the facility to be as close to them as possible. Our goal is to maximize a form of aggregated social benefit. We consider two social benefit functions: the sum of agents' utilities and the minimal agent utility, respectively denoted as the maxisum and the egalitarian objectives. The strategic aspect of the problem is that the agents' types and locations are not known to us, but rather reported to us by the agents-- an agent might misreport his type and\textbackslash or location in an attempt to move the facility towards or away from his true location. We therefore require the facility-locating mechanism to be strategyproof, namely that reporting truthfully is a dominant strategy for each agent. As simply maximizing the social benefit is generally not strategyproof, our goal is to design strategyproof mechanisms with good approximation ratios.

In our paper, we begin by providing a best-possible $3$- approximate deterministic strategyproof mechanism, as well as a $\frac{23}{13}$- approximate randomized strategyproof mechanism, both for the maxisum objective. We then provide a characterization of all deterministic strategyproof mechanisms, for the case where only type $1$ agents are present (also known as the obnoxious facility problem). We use that characterization to prove a (best-possible) lower bound of $3$ for the maxisum objective, and to show that the approximation ratio is unbounded for the egalitarian objective. When allowing for randomization, we prove lower bounds of $\frac{2}{\sqrt{3}}$ and $\frac{3}{2}$ for the maxisum and egalitarian objectives respectively. All lower bounds hold even when only type $1$ agents are present. Finally, while restricting ourselves to agents of type $1$ only, we consider a generalized model that allows an agent to control more than one location. In this generalized model, we provide $3$- and $\frac{3}{2}$- approximate strategyproof mechanisms for the maxisum objective in the deterministic and randomized settings, respectively (for the randomized setting, we actually provide a family of such mechanisms).
\end{abstract}

\section{Introduction}
When a public facility is to be constructed, the general population has varying preferences regarding its location. For example, if the facility is a transit station, members of the population who rely on public transportation are likely to want it located near them for convenience, while those who own a car will want the station located far from them in order to avoid the added commotion and traffic. These preferences must be taken into account by local government when it decides on the construction site. However, there are many situations in which the government does not know these preferences reliably and\textbackslash or cannot deduce them easily. Certainly, the government does not know whether a person prefers the facility near or far. In addition, the government might not know each person's reference point, that is where the person want the facility near to or far from. This can be due to poor data (the government does not fully know the location of houses and who lives where), but can also be because different people have different kinds of reference points (for example, a person who spends most of his time at work might use his workplace as his main reference point rather than his home).

More specifically, consider the problem of locating a single facility on a street, that is a closed interval. There are $n$ agents, located in the interval, divided into two types: type $1$ agents, who wish for the facility to be as far away from them as possible, and type $2$ agents, who wish for the facility to be as close to them as possible. In particular, the utility of a type $1$ agent equals his distance from the facility, while the utility of a type $2$ agent equals the length of the interval minus his distance from the facility.\footnote{See section 2 for a short explanation regarding the utility of type $2$ agents.} A social planner wishes to locate the facility in a way that maximizes some aggregated measure of the agents' utilities. However, we are interested in a variant of the problem (first introduced in \cite{DBLP:journals/teco/ProcacciaT13}), in which the types and locations of the agents are not known to the planner, but rather are reported to the planner by the agents themselves. In that case, an agent might misreport its type and\textbackslash or location if doing so will cause the planner to place the facility at a site more desirable to that agent. Due to this strategic aspect, the planner cannot simply locate the facility at the optimal location with respect to the reports. Instead, we require the mechanism used by the planner to be {\em strategyproof}: truthful reporting is a dominant strategy for each agent. Subject to this requirement, the planner's goal is to optimize the social benefit, in terms of worst case approximation ratio. We consider two social benefit functions: the maxisum function, which is simply the sum of the agents' utilities, and the egalitarian function, which is the minimum agent utility.

The strategic facility location problem and its variations have received a lot of attention in the recent literature. The case of the unbounded interval with type $2$ agents alone was studied in \cite{DBLP:journals/teco/ProcacciaT13}, \cite{DBLP:dblp_conf/sigecom/FeldmanW13} and \cite{DBLP:journals/corr/abs-1305-2446}; a notable characterization of deterministic strategyproof mechanisms in this setting is given in \cite{RePEc:kap:pubcho:v:35:y:1980:i:4:p:437-455}. The case of the bounded interval with agents of type $1$ alone, called the {\em obnoxious facility location} problem, was introduced in \cite{DBLP:journals/tcs/ChengYZ13} and further explored in \cite{DBLP:conf/cocoa/IbaraN12}. There is much related research, considering different graph topologies, different number of facilities, and more: see, for example, \cite{DBLP:dblp_conf/tamc/ChengHYZ13}, \cite{DBLP:journals/mor/AlonFPT10}, and \cite{DBLP:journals/corr/abs-0907-2049}. To the best of our knowledge, our paper is the first to consider the generalized, hybrid model which contains both types of agents.

Our main findings are summarized below:
\begin{itemize}
\item We design a $3$- approximate deterministic strategyproof mechanism, and a $\frac{23}{13}$- approximate randomized strategyproof mechanism for the maxisum objective.
\item We characterize deterministic strategyproof mechanisms when only type 1 agents are present.
\item We prove a lower bound of $3$ on the approximation ratio of deterministic strategyproof mechanisms for the maxisum objective, thus proving the optimality of the mechanism we provide for this setting. We also show that no deterministic strategyproof mechanism can provide a bounded approximation ratio for the egalitarian objective. These bounds hold even when all agents are of type $1$.
\item We prove lower bounds of $\frac{2}{\sqrt{3}}$ and $\frac{3}{2}$ on the approximation ratio of randomized strategyproof mechanisms for the maxisum and egalitarian objectives, respectively. These bounds hold even when all agents are of type $1$.
\item We consider a generalized model that allows an agent to control more than one location. In this model, we provide a $3$- and $\frac{3}{2}$- approximate strategyproof mechanisms for the maxisum objective in the deterministic and randomized settings respectively, assuming only type 1 agents are present (in the randomized setting, we actually provide a {\it family} of such mechanisms).
\end{itemize}

\section{Model}

Let $N = \{1, 2, \ldots, n\}$ be the set of agents, and let $I$ be the closed interval. We assume, without loss of generality, that $I=[0,2]$. Let the set of possible types be $T$ (generally, we will have $T=\{1,2\}$, but for some results we would like $T=\{1\}$). Each agent $i \in N$ reports a type $\theta_i \in T$ and a location $x_i \in I$. The vector ${\bf \theta}=(\theta_1, \theta_2, \ldots, \theta_n)$ is a type profile; for any $\alpha \in T$, we also use the notation $(\alpha, {\bf \theta_{-i}})=(\theta_1,\theta_2,\ldots,\theta_{i-1},\alpha,\theta_{i+1},\ldots,\theta_n)$, where ${\bf \theta_{-i}}=(\theta_1,\theta_2,\theta_{i-1},\theta_{i+1},\ldots,\theta_n)$ is a partial type profile of all agents but $i$. Similarly, the vector ${\bf x}=(x_1,\ldots,x_n)$ is a location profile (the notation $(\alpha,{\bf x_{-i}})$ is defined as $(x_1,\ldots,x_{i-1},\alpha,x_{i+1},x_n)$). A {\em deterministic} mechanism is a collection of functions $f = \{f_n | n \in \mathbb{N}\}$ such that each $f_n:T^n \times I^n \rightarrow I$ maps each type and location profiles ${\bf \theta}=(\theta_1,\ldots,\theta_n)$ and ${\bf x} = (x_1, x_2, \ldots, x_n)$ to the location of the facility. We 
use $f({\bf \theta},{\bf x})$ instead of $f_n({\bf \theta},{\bf x})$ when $n$ is clear from the context.
Similarly, a {\em randomized} mechanism is a collection of functions $f$ that maps each pair of type and location profiles to a probability distribution over $I$: if $f({\bf \theta},{\bf x})$ is the distribution $\pi$, then the facility is located by drawing a single sample from $\pi$.

We study deterministic and randomized mechanisms for
the problem of locating a single facility when the type and location of an agent are
{\em private} information to that agent and cannot be observed or otherwise
verified. It is therefore critical that the mechanism 
be {\em strategyproof}---it should be optimal for each agent $i$ to report
his {\em true} type $\theta_i$ and location $x_i$. To make this precise, we 
assume that if the facility is located at $y$, an agent's utility, equivalently benefit, is either $B(\theta_i,x_i,y)=|x_i-y|$ if $\theta_i=1$, or $B(\theta_i,x_i,y)=2-|x_i-y|$, if $\theta_i=2$ (the number $2$ is chosen merely because it is the length of $I$. Of course, what we really want is for the utility of a type $2$ agent to be $-|x_i-y|$. But in order to meaningfully discuss approximation ratios, we require nonnegative utilities. With this choice, the utility of each agent is between $0$ and $2$ regardless of his type, and the utility changes linearly with the distance the same way it would without adding the constant. This seems to be the natural choice). If the location 
of the facility is randomly distributed with distribution $\pi$, then the benefit of agent $i$
is simply $\mathbb{E}_{Y \sim \pi} [B(\theta_i,x_i,Y)]$,
where $Y$ is a random variable
with distribution $\pi$. The formal definition of strategyproofness is now:

\begin{definition}
A deterministic (randomized) mechanism $f$ is strategyproof if for each $i \in N$, 
each $\theta_i,\theta'_i \in T$, $x_i, x'_i \in I$ and for each
${\bf \theta_{-i}} \in T^{n-1}$, ${\bf x_{-i}} \in I^{n-1}$ we have $B(\theta_i,x_i, f((\theta_i, {\bf \theta_{-i}}),(x_i, {\bf x_{-i}}))) \geq B(\theta_i,x_i, f((\theta'_i, {\bf \theta_{-i}}),(x'_i, {\bf x_{-i}})))$ ($E_{Y \sim f((\theta_i, {\bf \theta_{-i}}),(x_i, {\bf x_{-i}}))} [B(\theta_i,x_i, Y)] \geq E_{Y \sim f((\theta'_i, {\bf \theta_{-i}}),(x'_i, {\bf x_{-i}}))}[B(\theta_i,x_i, Y)]$).
\end{definition}

In this paper we assume that locating a facility at $y$
when the type profile is ${\bf \theta}$ and the location profile is ${\bf x}$ gives the
{\em social benefit} $sb({\bf \theta},{\bf x},y)$, where we consider two possible options for $sb$: {\em maxisum}, defined by $sb({\bf \theta},{\bf x},y)=\sum_{i=1}^n B(\theta_i,x_i,y)$, and {\em egalitarian}, $sb({\bf \theta},{\bf x},y)=\min_{i \in N}{B(\theta_i,x_i,y)}$. When the facility is located according to a probability distribution $\pi$, {\em maxisum} is defined as $sb({\bf \theta},{\bf x},\pi)=E_{Y \sim \pi}[\sum_{i=1}^n B(\theta_i,x_i,Y)]$, and {\em egalitarian} as $sb({\bf \theta},{\bf x},\pi)=E_{Y \sim \pi}[\min_{i \in N}{B(\theta_i,x_i,Y)}]$.
The goal is to find a strategyproof
mechanism that does well with respect to maximizing (either definition of) the social  benefit. A natural
mechanism is the ``optimal'' mechanism: each pair of type profile ${\bf \theta}$ and location profile
${\bf x}$ is mapped to $OPT({\bf \theta},{\bf x})$, defined as\footnote{
If the social benefit is maximized by multiple locations $y$, an
exogenous tie-breaking rule is used to select one of the optimal locations.}
$OPT({\bf \theta},{\bf x}) \; \in \; \arg \max_{y \in I} sb({\bf \theta},{\bf x},y)$. However, the optimal mechanism is not generally strategyproof. For example, consider the case with two type $1$ agents, the first located at $\frac{2}{3}$ and the second located at $\frac{3}{2}$. In this case, the optimal mechanism for the maxisum objective would locate the facility at $0$. However, when the first agent is located at $\frac{1}{3}$ and the second agent is located at $\frac{3}{2}$, the optimal mechanism locates the facility at $2$. Since when the first agent's true location is $\frac{2}{3}$ he prefers the facility to be located at $2$ over $0$, he has an incentive to misreport his location to be $\frac{1}{3}$ instead, violating strategyproofness.

Given that strategyproofness and optimality cannot be achieved simultaneously,
it is necessary to find a tradeoff. In this paper we shall restrict ourselves
to strategyproof mechanisms that approximate the optimal social benefit as best
as possible: an $\alpha$- approximation ($\alpha \in [1,\infty)$) algorithm guarantees at least a $\frac{1}{\alpha}$ fraction
 of the optimal social benefit for every instance of the problem. Formally, the
approximation ratio of an algorithm $A$ is $\sup_{Q} \{ OPT(Q) / A(Q) \},$
where the supremum is taken over all possible instances $Q$ of the problem;
and $A(Q)$ and $OPT(Q)$ are, respectively, the benefits obtained by algorithm $A$
and the optimal algorithm on the instance $Q$. Our goal is to design a strategyproof
mechanism whose approximation ratio is as close to 1 as possible.

\section{Deterministic and Randomized Mechanisms for the Hybrid Model}

In this section, we provide a best-possible $3$- approximate deterministic strategyproof mechanism for the maxisum objective, as well as a $\frac{23}{13}$- approximate randomized strategyproof mechanism for the same objective.

\begin{theorem}
Let $R=\{i: \theta_i=1,x_i \leq 1\} \cup \{i: \theta_i=2,x_i \geq 1\}$ and $L=\{i:\theta_i=1,x_i>1\} \cup \{i: \theta_i=2,x_i<1\}$. Let $f$ be the mechanism that locates the facility at $2$ if $|R| \geq |L|$ and at $0$ otherwise. Then $f$ is a $3$- approximate strategyproof mechanism for the maxisum objective.
\end{theorem}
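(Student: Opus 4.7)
For strategyproofness, I would first observe that each agent $i$'s true pair $(\theta_i, x_i)$ determines their true side in $\{R, L\}$, that $B_i(2) \geq 1 \geq B_i(0)$ for agents in $R$ (so they weakly prefer the facility at $2$), and $B_i(0) \geq 1 \geq B_i(2)$ for agents in $L$ (so they weakly prefer $0$). Since the mechanism's output depends only on $|R|$ vs.\ $|L|$, and an agent can only affect this comparison by switching their own reported side (from $R$ to $L$ or vice versa), a misreport can never tip the majority in the agent's favor if it was previously against them. Hence truthful reporting is a dominant strategy.

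For the $3$-approximation, assume WLOG that $f=2$, so $|R|\geq|L|$. Two observations are central: $(a)$ $B_i(0)+B_i(2)=2$ for every $i$, regardless of type (direct computation), giving $sb(0)+sb(2)=2n$; and $(b)$ each $i\in R$ has $B_i(2)\geq 1$, so $sb(f)=sb(2)\geq |R|\geq n/2$. Next, since $sb(\cdot)$ is piecewise linear with concave kinks only at type-$2$ agents' positions (and convex kinks at type-$1$ positions), its maximum is attained in $\{0,2\}\cup\{x_k:k\in T_2\}$.

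I would then case-split on $y^*:=\arg\max sb$. The case $y^*=2$ is trivial. For $y^*=0$, combining $(a)$ and $(b)$ gives $sb(y^*)=2n-sb(f)\leq 3\,sb(f)$. For $y^*=x_k$ with $k\in T_2$, I would partition $N=R_1\cup R_2\cup L_1\cup L_2$ by type and side, and compute per-agent bounds on $B_i(y^*)-B_i(f)$: the $R_1$-contribution is $\leq 0$. If $y^*\geq 1$ (so $k\in R_2$), the contributions from $R_2$, $L_1$, $L_2$ are each bounded by $1$, yielding total excess $\leq |R_2|+|L|$. If $y^*<1$ (so $k\in L_2$), the $R_2$-contributions are each $\leq y^*$ and the $L$-contributions are each $\leq 2-y^*$, yielding total excess $\leq y^*|R_2|+(2-y^*)|L|$. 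In either subcase, using $|R_2|\leq|R|$ and $|L|\leq|R|$, the total excess is at most $2|R|\leq 2\,sb(f)$, so $sb(y^*)\leq 3\,sb(f)$, as required.

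The main obstacle is the bookkeeping in the final subcase, and in particular the $L$-contributions when $y^*<1$: for $L_1$ ($x_i>1>y^*$) one must use $x_i\leq 2$ to bound $2x_i-y^*-2\leq 2-y^*$, and for $L_2$ ($x_i<1$) one must split on whether $x_i\geq y^*$ or $x_i<y^*$, each sub-case yielding the uniform bound $2-y^*$. Once these bounds are in place, the concluding inequality $y^*|R_2|+(2-y^*)|L|\leq 2|R|$ is an immediate convex-combination argument, and the proof closes cleanly.
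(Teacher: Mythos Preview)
Your proof is correct, and it differs genuinely from the paper's argument. The paper fixes an arbitrary alternative location $a\in I$ and bounds the ratio $sb(a)/sb(f)$ by first pushing every $x_i$ to an extremal position (using a monotonicity-of-the-ratio argument), which collapses the denominator to $|R|$ and the numerator to an expression in $|R_1|,|R_2|,|L_1|,|L_2|$ and $a$; it then case-splits on $a\le 1$ versus $a\ge 1$ and bounds the resulting linear-in-$a$ expression by $3|R|$. Your approach instead exploits the piecewise-linear structure of $sb(\cdot)$ to restrict the optimum $y^*$ to $\{0,2\}$ or to a type-$2$ agent's location, dispatches $y^*=0$ via the clean identity $B_i(0)+B_i(2)=2$ together with $sb(2)\ge |R|\ge n/2$, and for $y^*=x_k$ bounds the \emph{per-agent excess} $B_i(y^*)-B_i(2)$ directly. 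The paper's route is more uniform (it never needs to identify where the optimum lies) but requires the slightly delicate ``push $x_i$ to worst case'' step; your route is more structural and makes transparent why the factor $3$ arises (total excess $\le 2|R|$, while $sb(f)\ge |R|$). One small remark: your sentence ``concave kinks only at type-$2$ agents' positions'' is literally true only when agents are not co-located, but the conclusion you draw---that an interior maximizer must coincide with some type-$2$ agent's position---remains valid in general, since an interior maximum forces a net concave kink and hence at least one type-$2$ agent at that point.
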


\begin{proof}
Strategyproofness is easy. Note that $R$ is the set of agents who weakly prefer the facility located at $2$ over $0$, and $L$ is the set of remaining agents. Since there are only two possible facility locations in this mechanism, the only case which requires analysis is when agent $i$ prefers the facility to be located at the endpoint {\it not} chosen by the mechanism. Assume without loss of generality that the facility is located at $2$, yet agent $i$ prefers the facility to be located at $0$ (that is, $i \in L$). Then, by misreporting, he cannot decrease $|R|$ and cannot increase $|L|$, and therefore regardless of his report, the facility will be located at $2$.\\

For the approximation ratio, let ${\bf \theta}$ and ${\bf x}$ be a type and location profile respectively. We would like to show that $\frac{sb({\bf \theta},{\bf x},a)}{sb({\bf \theta},{\bf x},f({\bf \theta},{\bf x}))}\leq 3$ for every possible facility location $a \in I$ \footnote{Note that the statement ``$\frac{sb({\bf \theta},{\bf x},a)}{sb({\bf \theta},{\bf x},f({\bf \theta},{\bf x}))}\leq 3$ for every possible facility location $a \in I$'' is equivalent to $\frac{sb({\bf \theta},{\bf x},OPT({\bf \theta},{\bf x}))}{sb({\bf \theta},{\bf x},f({\bf \theta},{\bf x}))}\leq 3$, as $OPT({\bf \theta},{\bf x})$ maximizes the numerator by definition and hence the ratio. However, we choose to analyze an arbitrary fixed $a$ rather than $OPT({\bf \theta},{\bf x})$ to avoid having to consider the impact agents' reports have on the optimal location of the facility.}. We will prove this for the case of $f({\bf \theta},{\bf x})=2$; the other case is similar. Let $R_j$ be the set of agents of type $j$ in $R$, and similarly let $L_j$ be the set of agents of type $j$ in $L$. Then $q_1=sb({\bf \theta},{\bf x},a)=\sum_{i \in R_1 \cup L_1} |x_i-a|+\sum_{i \in R_2 \cup L_2} (2-|x_i-a|)$, and $q_2=sb({\bf \theta},{\bf x},f({\bf \theta},{\bf x}))=\sum_{i \in R_1 \cup L_1}(2-x_i)+\sum_{i \in R_2 \cup L_2}x_i$. If the ratio $\frac{q_1}{q_2} <1$, there is nothing to prove. Assume $\frac{q_1}{q_2} \geq 1$. Note that for $i \in R_1 \cup L_1$, increasing $x_i$ by $\alpha$ decreases the denominator by $\alpha$, and decreases the numerator by at most $\alpha$ (might even increase the numerator in some cases). Since the ratio is at least $1$, such a change increases the ratio. Similarly, for $i \in R_2 \cup L_2$, decreasing $x_i$ has the same effect. Given that $x_i \leq 1$ for $i \in R_1$, $x_i \geq 1$ for $i \in R_2$, and $x_i \in [0,2]$ for all $i$, this implies $\frac{q_1}{q_2} \leq \frac{\sum_{i \in R_1}|a-1|+\sum_{i \in L_1 \cup L_2}(2-a)+\sum_{i \in R_2}(2-|a-1|)}{|R|}$. We break our proof into two cases:
\begin{enumerate}
\item $a \in [0,1]$: $\sum_{i \in R_1}|a-1|+\sum_{i \in L_1 \cup L_2}(2-a)+\sum_{i \in R_2}(2-|a-1|) = |R_1|+2|L_1|+2|L_2|+|R_2|+a(-|R_1|-|L_1|-|L_2|+|R_2|) \leq \max{\{|R|+2|L|,|L|+2|R_2|\}}$, where the inequality follows from the fact that the maximum is obtained when $a \in \{0,1\}$ (If $-|R_1|-|L_1|-|L_2|+|R_2| \leq 0$, then it is obtained at $a=0$, and otherwise at $a=1$). Note that $|L| \leq |R|$ since $f({\bf x})=2$, and that $|R_2| \leq |R|$ by definition. Thus, $\max{\{|R|+2|L|,|L|+2|R_2|\}} \leq 3|R|$. Therefore, $\frac{q_1}{q_2} \leq \frac{3|R|}{|R|}=3$
\item $a \in [1,2]$: $\sum_{i \in R_1}|a-1|+\sum_{i \in L_1 \cup L_2}(2-a)+\sum_{i \in R_2}(2-|a-1|) = -|R_1|+2|L_1|+2|L_2|+3|R_2|+a(|R_1|-|L_1|-|L_2|-|R_2|) \leq \max{\{|L|+2|R_2|,|R|\}}$. Again, both terms we're maximizing over are no more than $3|R|$, and so again $\frac{q_1}{q_2} \leq 3$.
\end{enumerate}
\end{proof}

Later, we prove a lower bound of $3$ on the approximation ratio possible under strategyproofness. Thus, the approximation ratio achieved by this mechanism is best-possible. Moreover, in the obnoxious facility model (when no type $2$ agents exist), the above mechanism reduces to the deterministic mechanism proposed in \cite{DBLP:journals/tcs/ChengYZ13}, who proved that it is a $3$- approximation for that special case.

We now use randomization in an attempt to lower the approximation ratio. Getting a $2$- approximation is easy: choosing each endpoint with probability $\frac{1}{2}$ is a $2$- approximate strategyproof mechanism \footnote{In \cite{DBLP:journals/tcs/ChengYZ13}, the authors note that this mechanism is $2$- approximate for the obnoxious facility model; this still holds true for the hybrid model.}. However, we can do better:

\begin{theorem}
Let $p_1=\frac{12}{23}$, $p_2=\frac{8}{23}$, and $p_3=\frac{3}{23}$. 
Consider the following randomized mechanism $f$. 
If $|R| \geq |L|$, then $P(f({\bf \theta},{\bf x})=2)=p_1$ and $P(f({\bf \theta},{\bf x})=0)=p_2$; if $|R|<|L|$, then $P(f({\bf \theta},{\bf x})=2)=p_2$ and $P(f({\bf \theta},{\bf x})=0)=p_1$; and either way, $P(f({\bf \theta},{\bf x})=1)=p_3$. The mechanism $f$ is a strategyproof, $\frac{23}{13}$- approximate mechanism.
\end{theorem}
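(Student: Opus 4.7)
The overall strategy has two parts: verify strategyproofness by tracking when a deviation can actually change the output distribution, and establish the approximation bound by reducing it to a per-agent inequality that is direct for $a\in[1,2]$ and requires a pairing argument (using $|R|\ge|L|$) for $a\in[0,1]$.

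For strategyproofness, I would observe that every report $(\theta'_i,x'_i)$ places agent $i$ either in $R$ or in $L$, so for a fixed partial profile $({\bf \theta_{-i}},{\bf x_{-i}})$ the mechanism produces only one of two possible distributions, depending on whether $i$ is counted in $R$ or in $L$; within either category, the distribution is unaffected by further changes in $(\theta'_i,x'_i)$. Letting $r=|R\setminus\{i\}|$ and $l=|L\setminus\{i\}|$, the outcome flips between $R$-winning and $L$-winning precisely when $r\in\{l-1,l\}$ and $i$ changes category; otherwise the distribution is identical. Since $b_i(2)\ge b_i(0)$ for every $i\in R$ while $b_i(0)>b_i(2)$ for every $i\in L$, and since the $R$-winning and $L$-winning distributions differ only by swapping the weights $p_1>p_2$ on the endpoints (with $p_3$ on $1$ common to both), a truthful agent weakly prefers the distribution corresponding to his true category and hence cannot gain by misreporting.

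For the approximation ratio, the reflection $x_i\mapsto 2-x_i$ swaps $R\leftrightarrow L$, swaps $p_1\leftrightarrow p_2$ in the mechanism's output, and sends $sb({\bf\theta},{\bf x},a)$ to $sb({\bf\theta},{\bf x},2-a)$, preserving both $OPT$ and $E[sb(f)]$; it therefore suffices to handle $|R|\ge|L|$. Clearing the common denominator $23$, the bound $sb(a)/E[sb(f)]\le 23/13$ is equivalent to
\[
13\,sb({\bf\theta},{\bf x},a)\;\le\;12\,sb({\bf\theta},{\bf x},2)+8\,sb({\bf\theta},{\bf x},0)+3\,sb({\bf\theta},{\bf x},1)
\]
for every $a\in[0,2]$. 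Defining $\phi_i(a)=13\,b_i(a)-12\,b_i(2)-8\,b_i(0)-3\,b_i(1)$ with $b_i(y):=B(\theta_i,x_i,y)$, the goal is $\sum_i\phi_i(a)\le 0$. Computing $\phi_i(a)$ on the four sub-cases $R_1,R_2,L_1,L_2$ of Theorem~1 gives explicit piecewise-linear expressions in $x_i$; for $a\in[1,2]$ the maxima over the admissible range of $x_i$ turn out to be non-positive in every sub-case, so the inequality holds term-by-term.

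The hard case is $a\in[0,1]$, where $\phi_i$ can be positive: an $L_1$-agent at $x_i=2$ with $a=0$ already has $\phi_i=7$. Here my plan is to derive $a$-dependent bounds $\max_{i\in L}\phi_i(a)\le\alpha(a)$ and $\max_{i\in R}\phi_i(a)\le -\beta(a)$ with $0\le\alpha(a)\le\beta(a)$; since the second maximum is non-positive, the hypothesis $|R|\ge|L|$ gives $\sum_{i\in R}\phi_i(a)\le -|R|\,\beta(a)\le -|L|\,\beta(a)$, so
\[
\sum_i\phi_i(a)\;\le\;|L|\,\alpha(a)+|R|(-\beta(a))\;\le\;|L|\bigl(\alpha(a)-\beta(a)\bigr)\;\le\;0.
\]
The main obstacle I anticipate is the careful bookkeeping for the four sub-cases: computing $\max_{x_i}\phi_i(a)$ as a piecewise-linear function of $a$ on each of $R_1,R_2,L_1,L_2$, and then verifying $\alpha(a)\le\beta(a)$ uniformly on $[0,1]$. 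The constants $12,8,3$ (equivalently $23/13$) are exactly those that make the pairing tight: the tightness is witnessed by the configuration with one $L_1$-agent at $x=2$ and one $R_1$-agent at $x=1$ (so $|R|=|L|=1$), where the ratio attains $23/13$ at $a=0$.
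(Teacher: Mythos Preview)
Your strategyproofness argument is correct and matches the paper's (which simply says ``similarly to Theorem 3.1'').

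Your approximation argument is also correct, but it is organized differently from the paper's. The paper works directly with the ratio $q_1/q_2=sb(a)/E[sb(f)]$: for each of the two ranges of $a$ it argues by monotonicity that the ratio is maximized when each $x_i$ is pushed to an extreme of its admissible interval, then splits further on which extreme is worst, and finally checks a handful of numerical fractions against $\tfrac{23}{13}$ using $|L|\le|R|$. Your route clears denominators once, reduces the claim to $\sum_i\phi_i(a)\le 0$ with $\phi_i(a)=13\,b_i(a)-12\,b_i(2)-8\,b_i(0)-3\,b_i(1)$, and then splits on $a$. For $a\in[1,2]$ the per-agent bound $\phi_i(a)\le 0$ indeed holds in all four sub-cases (the tight one is $L_2$, where $\phi_i(a)=7-13a+6x_i\le 0$). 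For $a\in[0,1]$ your pairing scheme works: one finds
\[
\alpha(a)=\sup_{i\in L}\phi_i(a)=7-7a,\qquad -\beta(a)=\sup_{i\in R}\phi_i(a)=\max\{13a-13,\;-7-13a\},
\]
so $\beta(a)=\min\{13(1-a),\,7+13a\}\ge 7(1-a)=\alpha(a)$ on all of $[0,1]$, exactly what is needed for $|L|\alpha(a)-|R|\beta(a)\le|L|(\alpha(a)-\beta(a))\le 0$. Your approach makes the role of the hypothesis $|R|\ge|L|$ and of the weights $(12,8,3)$ more transparent (the equalities $\alpha(0)=\beta(0)=7$ and $\alpha(1)=\beta(1)=0$ show the constants are tight), while the paper's direct optimization avoids having to compute $\alpha,\beta$ and verify $\alpha\le\beta$ uniformly in $a$.
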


\begin{proof}
Strategyproofness is proved similarly to Theorem 3.1. For the approximation ratio, we would like to show that  $\frac{sb({\bf \theta},{\bf x},a)}{sb({\bf \theta},{\bf x},f({\bf \theta},{\bf x}))}\leq \frac{23}{13}$ for every possible facility location $a \in I$. We will prove this for the case of $|R| \geq |L|$; the other case is similar. Define $R_j$ and $L_j$ as in the proof of Theorem 3.1. We begin by noting that the approximation ratio is bounded from above by $\frac{46}{19}$ (it is easy to see that every agent is guaranteed a benefit of at least $\frac{19}{23}$ in our mechanism, while the maximal benefit of any agent is $2$). Note that the mechanism's expected benefit is $q_2=-\frac{7}{23}\sum_{i \in R_1}x_i-\frac{1}{23}\sum_{i \in L_1}x_i+\frac{1}{23}\sum_{i \in R_2}x_i+\frac{7}{23}\sum_{i \in L_2}x_i+\frac{27}{23}|R_1|+\frac{21}{23}|L_1|+\frac{25}{23}|R_2|+\frac{19}{23}|L_2|$. We break into cases:
\begin{enumerate}
\item $a \in [0,1]$: in this case, the benefit from locating the facility at $a$ is $q_1=\sum_{i \in R_1} |a-x_i|+\sum_{i \in L_1}(x_i-a)+\sum_{i \in R_2}(2+a-x_i)+\sum_{i \in L_2}(2-|a-x_i|)$. Note that the ratio $\frac{q_1}{q_2}$ increases with $x_i$ for $i \in L_1$, and decreases with $x_i$ for $i \in R_2$; thus, to maximize it, we set $x_i=2$ for $i \in L_1$ and $x_i=1$ for $i \in R_2$. For $i \in L_2$, $x_i=a$ maximizes the ratio (note that changing $x_i$ by $\alpha$ decreases the numerator by $\alpha$ and either increases or decreases the denominator by $\frac{7}{23}\alpha$, a change that decreases the ratio since it is known to be no more than $\frac{46}{19} < \frac{23}{7}$). Depending on $a$, to maximize the ratio we need to set $x_i=0$ for all $i \in R_1$ or $x_i=1$ for all $i \in R_1$. We check both cases:
\begin{enumerate}
\item $x_i=0$ for all $i \in R_1$. Then the ratio becomes $\frac{a(|R_1|-|L_1|+|R_2|)+2|L_1|+|R_2|+2|L_2|}{\frac{27}{23}|R_1|+\frac{19}{23}|L_1|+\frac{26}{23}|R_2|+\frac{19}{23}|L_2|+\frac{7}{23}a|L_2|}$. As $|R_1|+|R_2|-|L_1| \geq |L_2| \geq 0$, this ratio increases with $a$ and hence maximized at $a=1$ \footnote{Of course, for agents $i \in L_2$ we technically cannot have $x_i=1$, but the bound holds nonetheless.}, which leads to the ratio: $\frac{|R_1|+|L_1|+2|R_2|+2|L_2|}{\frac{27}{23}|R_1|+\frac{19}{23}|L_1|+\frac{26}{23}|R_2|+\frac{26}{23}|L_2|} \leq \frac{23}{13}$.
\item $x_i=1$ for all $i \in R_1$. The ratio becomes $\frac{a(-|R_1|-|L_1|+|R_2|)+|R_1|+2|L_1|+|R_2|+2|L_2|}{\frac{20}{23}|R_1|+\frac{19}{23}|L_1|+\frac{26}{23}|R_2|+\frac{19}{23}|L_2|+\frac{7}{23}a|L_2|}$. Maximization occurs either at $a=1$ or at $a=0$. We check both cases:
\begin{enumerate}
\item $a=1$: the ratio becomes $\frac{|L_1|+2|R_2|+2|L_2|}{\frac{20}{23}|R_1|+\frac{19}{23}|L_1|+\frac{26}{23}|R_2|+\frac{26}{23}|L_2|} \leq \frac{23}{13}$.
\item $a=0$: the ratio becomes $\frac{|R_1|+2|L_1|+|R_2|+2|L_2|}{\frac{20}{23}|R_1|+\frac{19}{23}|L_1|+\frac{26}{23}|R_2|+\frac{19}{23}|L_2|}$. As $|L| \leq |R|$, it follows that the ratio is bounded from above by $\frac{1+2}{\frac{20}{23}+\frac{19}{23}}=\frac{23}{13}$.
\end{enumerate}
\end{enumerate}
\item $a \in [1,2]$: in this case, the benefit from locating the facility at $a$ is $q_1=\sum_{i \in R_1} (a-x_i)+\sum_{i \in L_1}|a-x_i|+\sum_{i \in R_2}(2-|a-x_i|)+\sum_{i \in L_2}(2+x_i-a)$. Similarly to the analysis in the previous case, we get that the ratio $\frac{q_1}{q_2}$ is maximized when  $x_i=0$ for $i \in R_1$, $x_i=1$ for $i \in L_2$, $x_i=a$ for $i \in R_2$, and $x_i=1$ for all $i \in L_1$ or $x_i=2$ for all $i \in L_1$. We break into cases:
\begin{enumerate}
\item $x_i=1$ for all $i \in L_1$: the ratio becomes $\frac{a(|R_1|-|L_2|+|L_1|)-|L_1|+2|R_2|+3|L_2|}{\frac{1}{23}a|R_2|+\frac{27}{23}|R_1|+\frac{20}{23}|L_1|+\frac{25}{23}|R_2|+\frac{26}{23}|L_2|}$. Maximum is obtained at either $a=1$ or $a=2$:
\begin{enumerate}
\item $a=1$: the ratio becomes $\frac{|R_1|+2|R_2|+2|L_2|}{\frac{27}{23}|R_1|+\frac{20}{23}|L_1|+\frac{26}{23}|R_2|+\frac{26}{23}|L_2|} \leq \frac{23}{13}$.
\item $a=2$: the ratio becomes $\frac{2|R_1|+|L_1|+2|R_2|+|L_2|}{\frac{27}{23}|R_1|+\frac{20}{23}|L_1|+\frac{27}{23}|R_2|+\frac{26}{23}|L_2|} \leq \frac{46}{27}<\frac{23}{13}$.
\end{enumerate}
\item $x_i=2$ for all $i \in L_1$: the ratio becomes $\frac{a(|R_1|-|L_2|-|L_1|)+2|L_1|+2|R_2|+3|L_2|}{\frac{1}{23}a|R_2|+\frac{27}{23}|R_1|+\frac{19}{23}|L_1|+\frac{25}{23}|R_2|+\frac{26}{23}|L_2|}$. Maximum is obtained at either $a=1$ or $a=2$:
\begin{enumerate}
\item $a=1$: the ratio becomes $\frac{|R_1|+|L_1|+2|R_2|+2|L_2|}{\frac{27}{23}|R_1|+\frac{19}{23}|L_1|+\frac{26}{23}|R_2|+\frac{26}{23}|L_2|} \leq \frac{23}{13}$.
\item $a=2$: the ratio becomes $\frac{2|R_1|+2|R_2|+|L_2|}{\frac{27}{23}|R_1|+\frac{19}{23}|L_1|+\frac{27}{23}|R_2|+\frac{26}{23}|L_2|} \leq \frac{46}{27}<\frac{23}{13}$.
\end{enumerate}
\end{enumerate}
\end{enumerate}
\end{proof}

The approximation ratio of the mechanism above is tight: when there are two agents of different types, with the type $1$ agent at $1$ and the type $2$ agent at $0$, the optimal benefit is $3$, whereas the mechanism's expected benefit is $\frac{39}{23}$, and the ratio is exactly $\frac{23}{13}$.

\section{Characterization of Deterministic Mechanisms for the Obnoxious Facility Model}

We now focus on the special case where there are no type 2 agents, namely $T=\{1\}$, also called the obnoxious facility model. The assumption that there are no type 2 agents will remain in effect for the rest of this section. Note that in this case an agent's report of its type is meaningless, and so we drop it from the input of the mechanism. In this section, we characterize all deterministic strategyproof mechanisms for the obnoxious facility model. Similar results have been independently obtained by others \cite{han2012moneyless, BBM, VM}. We begin with a temporary, somewhat weak characterization of deterministic mechanisms, in terms of single agent deviations:

\begin{theorem}[Reflection Theorem]
For any deterministic mechanism $f$, agent $i \in N$, and partial location profile ${\bf x_{-i}}$, define $f_{\bf x_{-i}}(a)=f(a,{\bf x_{-i}})$ \footnote{Note that when $i \neq j$, ${\bf x_{-i}}$ and ${\bf x_{-j}}$ are distinct objects, regardless of the values of their coordinates.}. Then, the mechanism $f$ is strategyproof iff each $f_{\bf x_{-i}}$ is of the following form: there exists (not necessarily distinct) $\alpha_{\bf x_{-i}}, \beta_{\bf x_{-i}} \in I$, such that $\beta_{\bf x_{-i}} \geq \alpha_{\bf x_{-i}}$ and:
\begin{enumerate}
\item $f_{\bf x_{-i}}(a)=\beta_{\bf x_{-i}}$ for $0 \leq a <\frac{\alpha_{\bf x_{-i}}+\beta_{\bf x_{-i}}}{2}$
\item $f_{\bf x_{-i}}(a)=\alpha_{\bf x_{-i}}$ for $\frac{\alpha_{\bf x_{-i}}+\beta_{\bf x_{-i}}}{2} <a \leq 2$
\item $f_{\bf x_{-i}}(\frac{\alpha_{\bf x_{-i}}+\beta_{\bf x_{-i}}}{2}) \in \{\alpha_{\bf x_{-i}},\beta_{\bf x_{-i}}\}$
\end{enumerate}
If $\alpha_{\bf x_{-i}} \neq \beta_{\bf x_{-i}}$, we call $\frac{\alpha_{\bf x_{-i}}+\beta_{\bf x_{-i}}}{2}$ the reflection point of $i$ for the partial profile ${\bf x_{-i}}$.
\end{theorem}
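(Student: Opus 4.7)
The plan is to distill strategyproofness into the single fact that $f_{\bf x_{-i}}(a)$ must maximize distance to $a$ among values in the range of $f_{\bf x_{-i}}$; the two-valued form then follows essentially from the geometry of the interval.

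For the ``if'' direction I would argue directly. Given the form with values $\alpha_{\bf x_{-i}} \leq \beta_{\bf x_{-i}}$ and midpoint $m=(\alpha_{\bf x_{-i}}+\beta_{\bf x_{-i}})/2$, a truthful report by agent $i$ at location $x_i$ returns whichever of $\alpha_{\bf x_{-i}}, \beta_{\bf x_{-i}}$ is farther from $x_i$, while any misreport still returns a value in $\{\alpha_{\bf x_{-i}}, \beta_{\bf x_{-i}}\}$; since a type $1$ agent's benefit is the distance to the facility, truthful reporting is weakly optimal.

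The bulk of the work is the ``only if'' direction. Assume $f$ is strategyproof and fix $i$ and ${\bf x_{-i}}$; set $g := f_{\bf x_{-i}}$ and $S := g(I)$. Applying strategyproofness with respect to location deviations yields $|a - g(a)| \geq |a - g(a')|$ for all $a, a' \in I$, and since $g(a')$ ranges over $S$ we obtain
\[
|a - g(a)| \;=\; \sup_{y \in S} |a - y| \qquad \text{for every } a \in I.
\]
I would then take three steps. Step 1: show that $\alpha := \inf S$ and $\beta := \sup S$ both lie in $S$. Evaluating at $a=0$ gives $g(0) = \sup S$ since $|0-y|=y$ is maximized on $[0,2]$ at the largest value of $y$; symmetrically $g(2) = \inf S$. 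Step 2: show $S \subseteq \{\alpha,\beta\}$. If some $\gamma \in S$ satisfied $\alpha < \gamma < \beta$, I would pick $a_0$ with $g(a_0)=\gamma$, and a short case split on the position of $a_0$ relative to $\alpha$ and $\beta$ shows $\max(|a_0-\alpha|,|a_0-\beta|) > |a_0-\gamma|$, contradicting the displayed equation. Step 3: given $S \subseteq \{\alpha,\beta\}$, the unique distance-maximizer from $a$ is $\beta$ when $a < (\alpha+\beta)/2$ and $\alpha$ when $a > (\alpha+\beta)/2$, with either value admissible at the midpoint, giving exactly clauses (1)--(3) of the theorem.

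The degenerate case $\alpha=\beta$ (so that $g$ is constant) is absorbed without extra work. The main obstacle, mild but genuine, is Step 1: one needs $\alpha$ and $\beta$ to actually belong to $S$, not merely to be limit points, and this is precisely where the boundedness of $I$ and the availability of the extreme evaluation points $a=0$ and $a=2$ become essential.
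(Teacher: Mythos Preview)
Your argument is correct, and it is genuinely different from the paper's. The paper fixes $\beta=g(0)$, sets $S=\{a:g(a)\neq\beta\}$, lets $m=\inf S$, and then works through a fairly intricate case analysis (whether $m\in S$, whether $m$ is a limit point of various level sets, subcases on the sign of distance differences) to pin down $\alpha=2m-\beta$ and show $g$ jumps from $\beta$ to $\alpha$ at $m$. Your route instead extracts the single consequence $|a-g(a)|=\sup_{y\in g(I)}|a-y|$ of strategyproofness and argues about the range directly: evaluating at the endpoints $a=0$ and $a=2$ forces $\sup g(I)$ and $\inf g(I)$ to be attained, and then any intermediate value is excluded because it can never be the farthest point of the range from any $a$. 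This is considerably shorter and conceptually cleaner; the paper's approach, by contrast, is more operational and has the minor advantage of exhibiting the reflection point as $\inf\{a:g(a)\neq g(0)\}$ along the way. Your remark that Step~1 genuinely needs the closed bounded interval (so that the extreme reports $0$ and $2$ are available) is exactly the right caveat.
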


\begin{proof}
First, assume that $f$ is of the form described above. On a partial location profile ${\bf x_{-i}}$, agent $i$ can only get the mechanism to choose one of (up to) two locations: $\alpha_{\bf x_{-i}}$ or $\beta_{\bf x_{-i}}$.  Let $q=\frac{\alpha_{\bf x_{-i}}+\beta_{\bf x_{-i}}}{2}$. If $x_i=q$, his distance from the two locations is equal, and so he is indifferent between them. If $x_i \in [0,q)$, $\beta_{\bf x_{-i}}$ is weakly \footnote{Weakly because it is possible that $\alpha_{\bf x_{-i}}=\beta_{\bf x_{-i}}$.} farther from him than $\alpha_{\bf x_{-i}}$, and so he weakly prefers $\beta_{\bf x_{-i}}$, which is what the mechanism chooses, so he has no incentive to deviate. The case of $x_i \in (q,2]$ is similar. Thus, $f$ is strategyproof.\\

On the other hand, assume that $f$ is strategyproof. Fix a location profile ${\bf x}$ and an agent $i$. Let $g=f_{\bf x_{-i}}$ and let $\beta=g(0)$. Let $S=\{a \in I: g(a) \neq \beta\}$. If $S$ is empty, then $g$ is constant, and we're done. So assume $S$ is nonempty. Consider $m=\inf{S}$. Note that if $\beta<m$, then an agent located at $\beta$ can benefit from a deviation to any point in $S$. Thus, $\beta \geq m$. Let $\alpha=2m-\beta$ (note that with our knowledge at this point, it might be the case that $\alpha$ is negative and hence not in $I$; our proof is careful not to assume otherwise). We begin by claiming that either $g(m)=\alpha$, or that $m$ is a limit point of the set $K=\{a \in I:g(a)=\alpha\}$. There are two cases to consider:
\begin{enumerate}
\item $m \in S$. Note that in this case $m>0$. We claim that in this case $g(m)=\alpha$. Assume otherwise, namely $g(m)=\alpha' \neq \alpha$. Note also that $\alpha' \neq \beta$ (since $m \in S$), and thus $m-\alpha \neq |m-\alpha'|$. There are two subcases:
\begin{enumerate}
\item $m-\alpha<|m-\alpha'|$. In that case, note that as long as the agent is to the left of $m$, the facility is located at $\beta$. Thus, if the agent is located at $m-\epsilon$ for some $\epsilon>0$, his distance from the facility is $\beta-m+\epsilon$. His distance from $\alpha'$ is at least $|m-\alpha'|-\epsilon$. However, as $\beta-m=m-\alpha<|m-\alpha'|$, we may choose $\epsilon$ small enough so that $|m-\alpha'|-\epsilon>\beta-m+\epsilon$. In this case, the agent's deviation from $m-\epsilon$ to $m$ is beneficial to that agent.
\item $m-\alpha>|m-\alpha'|$. In this case, it is still true that as long as the agent is to the left of $m$, the facility is located at $\beta$. As the distance of $\beta-m=m-\alpha>|m-\alpha'|$, it follows that an agent located at $m$ will benefit from deviating to the left.
\end{enumerate}
So indeed, $g(m)=\alpha$.
\item $m \notin S$. Then $m$ is a limit point of $S$ (by definition). In this case we claim that $m$ is a limit point of $K$. Furthermore, note that $\beta>m$, since if $\beta=m$, then as $m \notin S$, $g(\beta)=\beta$, and the agent can benefit by deviating from $\beta$ to any point in $S$. We note that since when the agent is located at $m$, the facility is located at $\beta$, strategyproofness dictates that the facility is always located in $[\alpha,\beta]$, no matter where the agent reports his location to be. Now, assume $m$ is not a limit point of $K$. Thus, it follows it must be a limit point of either $K_1=\{a \in I: \alpha< g(a) \leq m\}$ or $K_2=\{a \in I: m \leq g(a) < \beta\}$ \footnote{Note that since $g(a) \in [\alpha,\beta]$ for all $a \in I$, and $m$ is a limit point of $S$, it follows that $m$ is a limit point of $\{a \in I:a \in [\alpha,\beta)\}=K \cup K_1 \cup K_2$.}. So, there are two cases:
\begin{enumerate}
\item $m$ is a limit point of $K_1$. In particular, there exists some $\epsilon>0$ such that $m+\epsilon \in K_1$. We consider the following subcases:
\begin{enumerate}
\item There exists $0<\epsilon'<\epsilon$ s.t. $m+\epsilon' \in K_1$ and $g(m+\epsilon')<g(m+\epsilon)$; in this case, a deviation from $m+\epsilon$ to $m+\epsilon'$ is beneficial.
\item There exists $0<\epsilon'<\epsilon$ s.t. $m+\epsilon' \in K_1$ and $g(m+\epsilon')>g(m+\epsilon)$; in this case, a deviation from $m+\epsilon'$ to $m+\epsilon$ is beneficial.
\item $g(m+\epsilon')=g(m+\epsilon)$ for all $0<\epsilon'<\epsilon$ s.t. $m+\epsilon' \in K_1$. As $m$ is a limit point of $K_1$ and all points in $K_1$ are to the right of $m$, it follows that we may choose $\epsilon'$ as small as we want. For $\epsilon'$ small enough, this would imply that the deviation from $m+\epsilon'$ to $m$ is beneficial: the distance of $m+\epsilon'$ from $g(m+\epsilon)$ is $m-g(m+\epsilon)+\epsilon'$ and the distance of $m+\epsilon'$ from $\beta$ is $\beta-m-\epsilon'$. As $\beta-m>m-g(m+\epsilon)$, we may choose $\epsilon'$ small enough to make the deviation in question beneficial.
\end{enumerate}
\item $m$ is a limit point of $K_2$. So, there exists $0<\epsilon<\frac{\beta-m}{2}$ such that $m+\epsilon \in K_2$. The agent can benefit by deviating from $m+\epsilon$ to $m$ (since the facility will be sent from $g(m+\epsilon)$ to $\beta$, and since $m \leq g(m+\epsilon) <\beta$, $g(m+\epsilon)$ is closer to $m+\epsilon$ than $\beta$).
\end{enumerate}
Hence, by strategyproofness, we have reached a contradiction, and so $m$ must be a limit point of $K$.
\end{enumerate}

We have shown that if $m \in S$ then $g(m)=\alpha$, and otherwise by definition of $S$ $g(m)=\beta$. To complete the proof, we must show that $g(a)=\alpha$ for all $a>m$. Assume otherwise for some $a'>m$. First, note that since $g(m)$ is either $\alpha$ or $\beta$, $g(a) \in [\alpha,\beta]$ for all $a \in I$ by strategyproofness. Note that within this range, $\alpha$ is the point farthest from $a'$. Thus, the agent has an incentive to deviate from $a'$ to any point $a''$ for which $g(a'')=\alpha$, where the existance of such a point is guaranteed by the above discussion. This is a contradiction, and so we've completed our proof. \footnote{Note that we didn't actually need that $m$ is a limit point of $K$ in the second case, but merely that $K$ is nonempty. However, this doesn't seem to lead to a much simpler proof, so we stick with the more general argument.}
\end{proof}

As a corollary of the above theorem, we can deduce:

\begin{corollary}
For any deterministic strategyproof mechanism $f$, and any $n \in \mathbb{N}$, $R_n^f=\{f_n({\bf x}):{\bf x} \in I^n\}$ is finite.
\end{corollary}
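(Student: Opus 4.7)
The plan is to induct on $n$ and let the Reflection Theorem do essentially all of the work; I do not expect a genuine obstacle, only a careful organization of the recursion. The key observation powering the induction is that freezing any one agent's report at a fixed location turns an $n$-agent strategyproof mechanism into an $(n-1)$-agent strategyproof mechanism, so we can bootstrap.

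For the base case $n=1$ the partial profile $\mathbf{x}_{-1}$ is empty and the Reflection Theorem immediately says $f_1$ takes at most two values $\alpha,\beta \in I$, so $|R_1^f|\le 2$.

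For the inductive step, assuming the result for $n-1$, I would consider the two frozen mechanisms $h_0,h_2:I^{n-1}\to I$ defined by $h_0(\mathbf{y}) = f_n(0,\mathbf{y})$ and $h_2(\mathbf{y}) = f_n(2,\mathbf{y})$. Hard-coding agent $1$'s report at $0$ or $2$ preserves strategyproofness for every other agent, so each $h_j$ is a strategyproof $(n-1)$-agent mechanism in the obnoxious facility model and, by the inductive hypothesis, has finite range.

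Finally I would identify the parameters $\alpha_{\mathbf{x}_{-1}}$ and $\beta_{\mathbf{x}_{-1}}$ from the Reflection Theorem with the frozen values. If $\alpha_{\mathbf{x}_{-1}} = \beta_{\mathbf{x}_{-1}}$ then $f_n(\cdot,\mathbf{x}_{-1})$ is constant, equal to both $h_0(\mathbf{x}_{-1})$ and $h_2(\mathbf{x}_{-1})$. Otherwise $0\le \alpha_{\mathbf{x}_{-1}} < \beta_{\mathbf{x}_{-1}}\le 2$ forces the reflection point $\tfrac{\alpha_{\mathbf{x}_{-1}}+\beta_{\mathbf{x}_{-1}}}{2}$ to lie strictly inside $(0,2)$, so evaluating the theorem's piecewise description at $a=0$ and $a=2$ yields $\beta_{\mathbf{x}_{-1}}=h_0(\mathbf{x}_{-1})$ and $\alpha_{\mathbf{x}_{-1}}=h_2(\mathbf{x}_{-1})$. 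In either case the Reflection Theorem gives $f_n(\mathbf{x}) \in \{h_0(\mathbf{x}_{-1}),h_2(\mathbf{x}_{-1})\}$ for every full profile $\mathbf{x}$, and hence $R_n^f \subseteq \mathrm{range}(h_0)\cup \mathrm{range}(h_2)$, which is finite. The only mildly delicate point is the boundary check that the reflection point is strictly interior (so that plugging in $a=0$ and $a=2$ lands on the correct pieces of the description); everything else is mechanical.
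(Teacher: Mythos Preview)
Your argument is correct and rests on the same idea as the paper's proof: use the Reflection Theorem to show that, for each coordinate, replacing the agent's report by one of the endpoints $0$ or $2$ does not change the facility's location, so every value in $R_n^f$ is already attained at one of the $2^n$ endpoint profiles. The paper carries this out as a direct sweep over the agents (starting from an arbitrary profile and moving coordinates one by one to $\{0,2\}$), whereas you package the same recursion as an induction on $n$ by freezing agent~$1$ at an endpoint; the only cosmetic wrinkle is that your $h_0,h_2$ are single functions $I^{n-1}\to I$ rather than full ``mechanisms'' in the paper's sense, but since strategyproofness and the Reflection Theorem are stated per fixed $n$ this is harmless.
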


\begin{proof}
Let ${\bf x}$ be an arbitrary profile, and set ${\bf x}^0={\bf x}$. For a given profile ${\bf x}^{i-1}$, consider the profiles ${\bf z}=(0,{\bf x_{-i}}^{i-1})$ and ${\bf z}'=(2,{\bf x_{-i}}^{i-1})$. By the reflection theorem, at least one of $f({\bf z})=f({\bf x}^{i-1})$ or $f({\bf z}')=f({\bf x}^{i-1})$ is true. This is trivial if agent $i$ has no reflection point at ${\bf x}^{i-1}_{\bf -i}$. Otherwise, if he has such a reflection point $m$, if $x_i^{i-1}>m$ or $x_i^{i-1}<m$, he may deviate to $2$ or $0$ respectively without changing the facility's location; if $x_i^{i-1}=m$, then still the reflection theorem gives that $f(m,{\bf x}^{i-1}_{\bf -i})$ equals one of $f({\bf z})$ or $f({\bf z}')$, as required \footnote{$f({\bf z})$ if $f(m,{\bf x}^{i-1}_{\bf -i})=\beta_{{\bf x_{-i}}}$, and $f({\bf z'})$ if $f(m,{\bf x}^{i-1}_{\bf -i})=\alpha_{{\bf x_{-i}}}$.}. Set ${\bf x}^i$ equal to a profile among ${\bf z}$ and ${\bf z}'$ satisfying the equality. Thus, ${\bf x}^n$ is a profile in which all agents are located at the endpoints and $f({\bf x}^n)=f({\bf x})$. Since ${\bf x}$ was arbitrary, we have that all elements of $R_n^f$ can be obtained by applying the mechanism to profiles locating all agents at the endpoints. Since there are only finitely many such profiles, $R_n^f$ is finite.
\end{proof}

Now it is time for our strong characterization result. Consider the following definition:

\begin{definition}
Let $f$ be a deterministic mechanism s.t. $|R_n^f| \leq 2$ for all $n \in \mathbb{N}$. For each $n \in \mathbb{N}$, let $R_n^f=\{\alpha_n,\beta_n\}$ s.t. $\beta_n \geq \alpha_n$ \footnote{$\alpha_n=\beta_n$ is possible.}, and let $m_n=\frac{\alpha_n+\beta_n}{2}$. For any $n \in \mathbb{N}$, for every profile ${\bf x} \in I^n$, consider the partition of the agents $L^{\bf x}=\{i \in N:x_i<m_n\}$, $M^{\bf x}=\{i \in N:x_i=m_n\}$, and $E^{\bf x}=\{i \in N:x_i>m_n\}$.  We say that $f$ is a {\emph midpoint} mechanism if it satisfies the following property: for any $n \in \mathbb{N}$, let ${\bf x},{\bf y} \in I^n$ be any profiles s.t. $f({\bf x})=\beta_n$ and $f({\bf y})=\alpha_n$. If $\beta_n>\alpha_n$, then there exists an agent $i$ which satisfies one of the following:
\begin{enumerate}
\addtolength{\itemindent}{0.7cm}
\item[(D-1)] $i \in L^{\bf x}$ and $i \in M^{\bf y}$
\item[(D-2)] $i \in L^{\bf x}$ and $i \in E^{\bf y}$
\item[(D-3)] $i \in M^{\bf x}$ and $i \in E^{\bf y}$
\end{enumerate}
\end{definition}
 
This definition is simple to interpret: the mechanism can switch the facility location from right to left or from left to right only when an agent crosses the midpoint in the opposite direction.

In \cite{DBLP:conf/cocoa/IbaraN12}, the authors show that for a strategyproof mechanism $f$, $|R_n^f| \leq 2$ whenever $R_n^f$ is a finite set\footnote{While they assume anonymity, the proof of this fact does not rely on that assumption.} \footnote{Note that this is not an immediate implication of the Gibbard-Satterthwaite theorem, since the agents, by reporting a location, cannot arbitrarily "rank" the locations in $R_n^f$; only some rankings are feasible.}. Using that, we can now show:

\begin{theorem}
A deterministic mechanism $f$ is strategyproof iff it is a midpoint mechanism.
\end{theorem}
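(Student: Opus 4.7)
The plan is to use the Reflection Theorem together with the bound $|R_n^f|\leq 2$ (obtained by combining Corollary 1 with the cited Ibara--Nagamochi result) to analyze how $f$ can transition between its at most two output values $\alpha_n \leq \beta_n$, with midpoint $m_n = (\alpha_n+\beta_n)/2$. When $\alpha_n = \beta_n$ both the midpoint property and strategyproofness are trivial, so I assume $\alpha_n < \beta_n$ throughout.

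For the forward direction (strategyproof $\Rightarrow$ midpoint), fix profiles ${\bf x},{\bf y}\in I^n$ with $f({\bf x})=\beta_n$ and $f({\bf y})=\alpha_n$, and build the hybrid path ${\bf x}^{(0)}={\bf x}$, $\ {\bf x}^{(k)}=(y_1,\ldots,y_k,x_{k+1},\ldots,x_n)$, $\ {\bf x}^{(n)}={\bf y}$. Because each $f({\bf x}^{(k)})\in\{\alpha_n,\beta_n\}$, there exists some $k$ with $f({\bf x}^{(k-1)})=\beta_n$ and $f({\bf x}^{(k)})=\alpha_n$. The two profiles differ only in coordinate $k$, so the Reflection Theorem applies to $f_{{\bf w_{-k}}}$ with ${\bf w_{-k}} = (y_1,\ldots,y_{k-1},x_{k+1},\ldots,x_n)$, and since both $\alpha_n$ and $\beta_n$ occur as values of $f_{{\bf w_{-k}}}$, the two-value set must be $\{\alpha_n,\beta_n\}$ and the reflection point is $m_n$. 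The three-clause form in the Reflection Theorem then forces $x_k\leq m_n$ (equality only if tie-breaking at $m_n$ returns $\beta_n$) and $y_k\geq m_n$ (equality only if tie-breaking returns $\alpha_n$). Both equalities cannot hold simultaneously, because the tie-breaking produces a single value at $m_n$; equivalently $x_k = y_k = m_n$ would make ${\bf x}^{(k-1)}={\bf x}^{(k)}$, contradicting the difference in $f$-values. So agent $k$ fits exactly one of the patterns (D-1), (D-2), (D-3).

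For the backward direction (midpoint $\Rightarrow$ strategyproof), suppose some agent $i$ with true location $x_i$ deviates to $x_i'$, yielding ${\bf y}=(x_i',{\bf x_{-i}})$. If $f({\bf x})=f({\bf y})$ there is nothing to prove. If $f({\bf x})=\beta_n$ and $f({\bf y})=\alpha_n$, apply the midpoint property to the ordered pair $({\bf x},{\bf y})$; the agent $j$ it produces must equal $i$ since the two profiles agree off coordinate $i$. In each of (D-1)--(D-3) one reads off $x_i\leq m_n$, which gives $|x_i-\beta_n|\geq|x_i-\alpha_n|$, so the deviation is not strictly profitable for a type-$1$ agent. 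The symmetric case $f({\bf x})=\alpha_n$, $f({\bf y})=\beta_n$ is handled identically by applying the midpoint property to $({\bf y},{\bf x})$, yielding $x_i\geq m_n$ and hence $|x_i-\alpha_n|\geq|x_i-\beta_n|$.

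The main subtlety is the boundary behavior at $m_n$: in the forward direction, I must argue that $x_k$ and $y_k$ cannot both sit at the midpoint (handled via the single-valued tie-breaking observation above), and in the backward direction, the weak inequalities $x_i\leq m_n$ or $x_i\geq m_n$ yield only weak preference for the realized outcome, which is exactly what strategyproofness requires. Everything else amounts to bookkeeping between the three clauses of the Reflection Theorem and the three cases (D-1)--(D-3).
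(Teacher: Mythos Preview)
Your proposal is correct and follows essentially the same approach as the paper: both directions rely on the hybrid-path construction together with the Reflection Theorem and the $|R_n^f|\le 2$ bound from Corollary~1 and Ibara--Nagamochi. The only cosmetic difference is that for the strategyproof $\Rightarrow$ midpoint direction the paper argues by contradiction (assuming no agent satisfies (D-1)--(D-3) and showing the facility location never switches along the hybrid path), whereas you argue directly by locating the first transition step and reading off which of (D-1)--(D-3) the corresponding agent satisfies.
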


\begin{proof}
First, consider a given midpoint mechanism $f$, and fix $n \in \mathbb{N}$. If $f_n$ is constant, then clearly it is strategyproof. Otherwise, $|R_n^f|=2$. Consider a profile ${\bf x} \in I^n$ and an agent $i \in N$. The facility can only be located at $\alpha_n$ or $\beta_n$. If $i \in M_n^{\bf x}$, he is indifferent between the two points, and thus has no incentive to deviate. If $i \in E_n^{\bf x}$, he prefers the facility to be located at $\alpha_n$; however, if $f({\bf x}) \neq \alpha_n$, note that agent $i$ cannot move the facility to $\alpha_n$ by deviating- the rest of the agents remain still, and he himself cannot be the agent required in the definition of the midpoint property (that agent cannot be in $E_n^{\bf x}$). The proof is similar for $i \in L_n^{\bf x}$.\\

Now, assume instead that $f$ is a strategyproof mechanism. Fix $n \in \mathbb{N}$. By corollary 4.2, $R_n^f$ is finite, and thus by Ibara's and Nagamochi's result, $|R_n^f| \leq 2$. If $R_n^f$ is a singleton there is nothing to prove; thus, assume $|R_n^f|=2$, and let $\alpha_n,\beta_n \in R_n^f$ s.t. $\beta_n>\alpha_n$. Let ${\bf x},{\bf y} \in I^n$ s.t. $f({\bf x})=\beta_n$ and $f({\bf y})=\alpha_n$. Consider the sequence of profiles ${\bf z}^i$, defined for $i=0,\ldots,n$ via ${\bf z}^i_j={\bf x}_j$ if $j > i$ and  ${\bf z}^i_j={\bf y}_j$ otherwise. Assume no agent satisfies at least one of (D-1), (D-2) and (D-3). Then, when agent $i$ deviates in ${\bf z}^{i-1}$ to create profile ${\bf z}^i$, he does not cross $m_n$ from left to right (i.e. moving from ${\bf z}^{i-1}_i<m_n$ to ${\bf z}^i_i \geq m_n$ or ${\bf z}^{i-1}_i \leq m_n$ to ${\bf z}^i_i > m_n$). As the possible facility locations are $\alpha_n$ and $\beta_n$, $m_n$ is his only candidate for reflection point in ${\bf z}^{i-1}_{\bf -i}$. Thus, the reflection theorem implies that he cannot change the facility location to $\alpha_n$ by deviating. Hence, $f({\bf y})=f({\bf z}^n)=f({\bf z}^0)=f({\bf x})$, contradiction.
\end{proof}

We note that Ibara and Nagmochi have characterized all anonymous mechanisms under the assumption that $R_n^f$ is finite for all $n \in \mathbb{N}$, using what they called ``valid threshold mechanisms". Our proofs easily translate to the anonymous case, and under anonymity, our midpoint mechanisms become equivalent to valid threshold mechanisms. Thus, our work allows the removal of the finite $R_n^f$ assumption for the anonymous case as well.

\section{Lower Bounds on Deterministic Mechanisms}

We can use our characterization to obtain lower bounds on the possible approximation ratios for the maxisum and egalitarian objectives in the deterministic setting. We note that negative results obtained when $T=\{1\}$ clearly hold when $T=\{1,2\}$.

\begin{theorem}
No deterministic strategyproof mechanism $f$ can provide an approximation ratio better than $3$ for the maxisum objective, even when $T=\{1\}$.
\end{theorem}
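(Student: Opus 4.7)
The plan is to use Theorem 3 to reduce any strategyproof mechanism $f$ on two agents to one with $|R_2^f| \leq 2$, and to exhibit a profile with approximation ratio at least 3. A constant mechanism gives unbounded ratio immediately (place both agents at the constant value), so write $R_2^f = \{\alpha, \beta\}$ with $\alpha < \beta$ and midpoint $m = (\alpha+\beta)/2$; by the symmetry $x \mapsto 2-x$, assume $m \leq 1$. The midpoint property at ${\bf x} = (2,2)$ (where $L^{\bf x} \cup M^{\bf x} = \emptyset$) forces $f(2,2) = \alpha$, since otherwise no profile could output $\alpha$, contradicting $|R_2^f| = 2$; symmetrically $f(0,0) = \beta$. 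At profile $(0,1)$, OPT is $3$ and the benefit is at most $1$ unless $f = \beta$ and $\beta > 1$, so to avoid ratio $\geq 3$ there we need $f(0,1) = \beta$ and $\beta > 1$.

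In the main case $m < 1$, the Reflection Theorem applied to the restriction $a \mapsto f(0, a)$ shows $f(0, \cdot) \equiv \beta$: a non-constant restriction taking values in $\{\alpha, \beta\}$ would have reflection point exactly $m$ and so give $f(0, 1) = \alpha$ (since $1 > m$), contradicting $f(0,1) = \beta$. In particular $f(0, \beta) = \beta$. Examining the profile $(m, \beta)$, a short case analysis branching on whether OPT equals $m + \beta$ or $4 - m - \beta$ shows that either the ratio at $(m, \beta)$ is already at least 3, or $f(m, \beta) = \alpha$ is forced. In the latter sub-case, $f(\cdot, \beta)$ takes both mechanism values ($\beta$ at $0$ and $\alpha$ at $m$), and the Reflection Theorem forces the reflection form $f(a, \beta) = \beta$ for $a < m$; the profile $(m - \delta, \beta)$ then has benefit $\beta - m + \delta$ against OPT $\max(m + \beta - \delta, 4 - m - \beta + \delta)$, and the ratio tends, as $\delta \to 0^+$, to a value at least $3$ (the lower bound coming from $\beta \leq 2m$, equivalent to $\alpha \geq 0$).

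The $m = 1$ case is analogous: either the profile $(1 - \epsilon, 2)$ has $f = \beta$, directly yielding ratio $(3-\epsilon)/(1+\epsilon) \to 3$, or propagating Reflection Theorem deductions through the restrictions $f(\cdot, 2)$, $f(0, \cdot)$, $f(2, \cdot)$, $f(\cdot, 0)$ leads to a profile of the form $(2, 1 - \epsilon)$ or $(1 + \epsilon, 0)$ with the same limiting ratio.

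The main obstacle is coordinating the Reflection Theorem deductions across several one-dimensional restrictions of $f$. The key uniform fact is that every non-constant restriction on $\{\alpha, \beta\}$ has reflection point exactly $m = (\alpha+\beta)/2$, which tightly constrains cross-restriction consistency and eventually forces the mechanism into an identified bad profile.
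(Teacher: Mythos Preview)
Your argument is correct in outline, and the terminal profile you reach in the main case, $(m-\delta,\beta)$ with output $\beta$, is exactly the right one. The $m=1$ case is only sketched, but it can be completed along the lines you indicate: for instance, once $f(\cdot,2)\equiv\alpha$ is forced, $f(0,\cdot)$ becomes non-constant with reflection point $1$, and already the profile $(0,1+\epsilon)$ gives ratio $(3-\epsilon)/(1+\epsilon)\to 3$.

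That said, your route is considerably more circuitous than the paper's. The paper does not split into $m<1$ versus $m=1$, does not use the auxiliary profile $(0,1)$ to force $\beta>1$, and does not reduce by the symmetry $x\mapsto 2-x$. It works directly from the midpoint characterization: place half the agents at $\alpha_n$ and half at $\beta_n$, assume without loss of generality the output is $\beta_n$, and slide the $\alpha_n$-agents to $m_n-\epsilon$. Since no agent crosses $m_n$ from left to right, the midpoint property keeps the output at $\beta_n$, and comparing against the benefit at $\alpha_n$ on this new profile gives a ratio tending to $3$ as $\epsilon\to 0$. For $n=2$ this lands immediately on your profile $(m-\delta,\beta)$, bypassing the entire chain through $(0,1)$, $f(0,\cdot)\equiv\beta$, and $(m,\beta)$. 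The practical difference is that you repeatedly apply the one-dimensional Reflection Theorem to individual restrictions $f(0,\cdot)$, $f(\cdot,\beta)$, $f(\cdot,2)$, etc., whereas a single invocation of the midpoint property (Definition~2 / Theorem~4) handles all the ``which side of $m$'' bookkeeping at once and makes the case analysis disappear.
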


\begin{proof}
Let $f$ be a deterministic strategyproof mechanism. Assume $T=\{1\}$. Let $n \in \mathbb{N}$ be even. If $f_n$ is constant, the approximation ratio is clearly unbounded. If $R_n^f$ is not a singleton, then by Theorem 4.4, $|R_n^f|=2$. Consider the profile ${\bf x} \in I^n$ which locates agents $1$ through $\frac{n}{2}$ at $\alpha_n$, agents $\frac{n}{2}+1$ through $n$ at $\beta_n$ (where $\alpha_n<\beta_n$ are as in the definition of midpoint mechanism). Assume without loss of generality that $f({\bf x}) \neq \alpha_n$. Consider the profile ${\bf y}$ which locates agents $1$ through $\frac{n}{2}$ at $m_n-\epsilon$ for some $\epsilon>0$, and agrees with ${\bf x}$ on the rest of the agents. Since no deviating agent reaches $m_n$, the facility location doesn't change, that is $f({\bf y})=f({\bf x})=\beta_n$. Locating the facility at $\beta_n$ (on profile ${\bf y}$) leads to a benefit of $\frac{n}{2} \cdot (\frac{\beta_n-\alpha_n}{2}+\epsilon)$, while locating the facility at $\alpha_n$ leads to a benefit of $\frac{n}{2} \cdot (\frac{\beta_n-\alpha_n}{2}-\epsilon)+\frac{n}{2}(\beta_n-\alpha_n)$, and sending $\epsilon \rightarrow 0$ gives us the required result. \footnote{If $n$ is odd, we could still make this proof work by locating the additional agent at $m_n$ and send $n \rightarrow \infty$.}
\end{proof}

By Theorem 3.1, our lower bound is best-possible. Our characterization can also be used to get a lower bound for the egalitarian objective:

\begin{theorem}
No deterministic strategyproof mechanism $f$ can provide a bounded approximation ratio for the egalitarian objective, even when $T=\{1\}$.
\end{theorem}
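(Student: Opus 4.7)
The plan is to leverage the characterization just proved (Theorem 4.4) together with the Ibara--Nagamochi observation that $|R_n^f|\le 2$ for any deterministic strategyproof $f$. The key structural fact to exploit is that, under the egalitarian objective, placing a single agent exactly at the facility location zeros out the entire social benefit, while the optimum can remain bounded away from zero. I would fix a small $n$ (say $n=2$ suffices) and argue that no matter which of the two possible mechanism outputs is selected, an adversarial profile can force an agent to sit on top of it.

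More concretely, let $f$ be any deterministic strategyproof mechanism and consider $f_2$. By Theorem 4.4, either $|R_2^f|=1$ or $|R_2^f|=2$. In the first case $f_2$ is constant, say $f_2\equiv c\in I$; placing both agents at $c$ yields mechanism benefit $0$, whereas the optimum places the facility at whichever endpoint is farther from $c$, obtaining social benefit $\max(c,2-c)\ge 1>0$. In the second case $R_2^f=\{\alpha,\beta\}$ with $\alpha<\beta$, and I would simply place agent $1$ at $\alpha$ and agent $2$ at $\beta$. Since the midpoint characterization forces $f_2$ to output either $\alpha$ or $\beta$, one of the two agents coincides with the facility and the egalitarian benefit is $0$. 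On the other hand, placing the facility at the midpoint $m=(\alpha+\beta)/2$ gives every agent benefit $(\beta-\alpha)/2>0$, so $OPT>0$. In either case $OPT(Q)/f(Q)=\infty$, hence the approximation ratio is unbounded.

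The proof is short because the characterization does essentially all the work: the only nontrivial observation is that the egalitarian objective is ``fragile'' in the sense that a single coincidence between an agent and the facility destroys it, which dovetails with the fact that a deterministic strategyproof mechanism has at most two possible outputs. I do not anticipate a real obstacle; the only item to double-check is that the construction is legal (i.e.\ that $\alpha,\beta\in I$, which follows from $R_n^f\subseteq I$) and that the argument does not accidentally require $n$ to grow (it does not—$n=2$ is sufficient, though the same argument trivially extends to any $n\ge 2$ by placing the remaining agents at $m$, which does not affect the zero egalitarian value nor the positivity of $OPT$).
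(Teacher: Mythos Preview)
Your proof is correct and follows essentially the same approach as the paper: invoke the characterization (Theorem 4.4) to conclude $|R_n^f|\le 2$, then place one agent at each point of $R_n^f$ so that the egalitarian benefit is $0$ while the optimum is positive. The paper states this in one line for general $n\ge 2$, whereas you spell out the $n=2$ case and the two subcases explicitly, but the argument is the same.
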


\begin{proof}
Assume $T=\{1\}$. For any $n \geq 2$, $|R_n^f| \leq 2$ by Theorem 4.4. Consider any profile which locates at least one agent at each point in $R_n^f$; any such profile leads to a social benefit of $0$ for the mechanism, whereas the optimal benefit is positive.
\end{proof}

\section{Lower Bounds on Randomized Mechanisms}

We begin with the maxisum objective. We provide a lower bound of $\frac{2}{\sqrt{3}}$ on the approximation ratio of randomized strategyproof mechanisms.

\begin{theorem}
No randomized strategyproof mechanism can provide an approximation ratio better than $\frac{2}{\sqrt{3}}$ for the maxisum objective, even when $T=\{1\}$.
\end{theorem}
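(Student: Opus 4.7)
The plan is to produce a bad two-agent profile and couple it, via strategyproofness, with two auxiliary ``clustered'' profiles on which the approximation guarantee forces the mechanism's distribution to put mass near a specific endpoint. For a parameter $t\in(0,1)$, I would consider
\[
P_t=(1-t,\,1+t),\qquad Q_t=(1-t,\,1-t),\qquad Q'_t=(1+t,\,1+t),
\]
with $\mathrm{OPT}(P_t)=2$, $\mathrm{OPT}(Q_t)=\mathrm{OPT}(Q'_t)=2(1+t)$ (achieved at $y=2$ and $y=0$ respectively).

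First I would write down the approximation constraints on the auxiliary profiles. Letting $\pi_P=f(P)$ and $Y\sim\pi_P$, the fact that the welfare on $Q_t$ (resp.\ $Q'_t$) is at least $\mathrm{OPT}/\rho$ gives
\[
E_{\pi_{Q_t}}\!\bigl[|1-t-Y|\bigr]\ \ge\ \frac{1+t}{\rho},\qquad E_{\pi_{Q'_t}}\!\bigl[|1+t-Y|\bigr]\ \ge\ \frac{1+t}{\rho}.
\]
Next I would invoke strategyproofness along the two one-agent deviations joining $P_t$ to $Q_t$ and to $Q'_t$: agent $2$ at $1+t$ could misreport $1-t$, and agent $1$ at $1-t$ could misreport $1+t$, so
\[
E_{\pi_{P_t}}\!\bigl[|1+t-Y|\bigr]\ \ge\ E_{\pi_{Q_t}}\!\bigl[|1+t-Y|\bigr],\qquad E_{\pi_{P_t}}\!\bigl[|1-t-Y|\bigr]\ \ge\ E_{\pi_{Q'_t}}\!\bigl[|1-t-Y|\bigr].
\]

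The heart of the argument is to turn the known lower bound on $E[|1-t-Y|]$ under $\pi_{Q_t}$ into a useful lower bound on $E[|1+t-Y|]$ under the same distribution (and symmetrically for $Q'_t$). The identity
\[
|1-t-Y|^{2}+|1+t-Y|^{2}=2(1-Y)^{2}+2t^{2}\ \le\ 2(1+t^{2})\qquad (Y\in[0,2])
\]
is the key tool: combined with Cauchy--Schwarz it relates the first moments of $|1-t-Y|$ and $|1+t-Y|$ to each other via the restriction $Y\in[0,2]$. Inserting the approximation lower bound $A:=E_{\pi_{Q_t}}[|1-t-Y|]\ge(1+t)/\rho$ into this relation produces a lower bound on $E_{\pi_{Q_t}}[|1+t-Y|]$ (in combination with the trivial bound $E[|1+t-Y|]\le 1+t$ and the fact that $\pi_{Q_t}$ cannot both concentrate near $Y=2$ and give a small value of $E[|1+t-Y|^{2}]$).

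Putting everything together, strategyproofness then forces
\[
E_{\pi_{P_t}}\!\bigl[|1-t-Y|\bigr]+E_{\pi_{P_t}}\!\bigl[|1+t-Y|\bigr]\ \ge\ G(t,\rho)
\]
for an explicit function $G$, while $\mathrm{OPT}(P_t)=2$ forces the same welfare to be at most $2$. The resulting inequality $G(t,\rho)\le 2$ reduces, after simplification, to a bound of the form $\rho^{2}\ge\frac{(1+t)^{2}}{c_1(t)+c_2(t)}$; maximizing over $t\in(0,1)$ (the critical value turns out to be $t=1/\sqrt{3}$, where $1+t^{2}=4/3$) yields the desired $\rho\ge 2/\sqrt{3}$.

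The main obstacle I expect is step three: executing the Cauchy--Schwarz argument tightly enough to convert the one-sided bound on $E[|1-t-Y|]$ into a matching bound on $E[|1+t-Y|]$. A naive triangle-inequality bound $|1+t-Y|\ge|1-t-Y|-2t$ only yields the trivial $\rho\ge 1$, so the second-moment identity above is essential; one must be careful to use $E[(1-Y)^{2}]\le 1$ (which is tight exactly for the extremal distributions supported on $\{0,2\}$) and to check that the extremal $t$ lies in the allowed range.
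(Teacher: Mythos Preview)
Your step three has a genuine gap, and in fact the whole three-profile framework cannot deliver a nontrivial bound.

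First, the direction of the second-moment argument is wrong. From the identity $|1-t-Y|^2+|1+t-Y|^2=2(1-Y)^2+2t^2\le 2(1+t^2)$ together with Cauchy--Schwarz you get $E_{\pi_{Q_t}}[|1+t-Y|^2]\le 2(1+t^2)-A^2$, hence an \emph{upper} bound on $E_{\pi_{Q_t}}[|1+t-Y|]$, not the lower bound you need. More decisively, the best possible lower bound on $v:=E[|1+t-Y|]$ given $u:=E[|1-t-Y|]\ge A$ over distributions on $[0,2]$ is exactly the triangle-inequality bound $v\ge A-2t$: the lower boundary of the feasible $(u,v)$ region is $v=|u-2t|$, and the point $(A,A-2t)$ is achieved by a point mass at $y=A+1-t\in[1+t,2]$. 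So no second-moment trick can beat what you already dismissed as ``trivial.''

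Second, the constraints you impose are consistent with $\rho=1$. Take $\pi_{Q_t}=\delta_2$, $\pi_{Q'_t}=\delta_0$, and $\pi_{P_t}$ any mixture of $\delta_0,\delta_2$. Then all three approximation constraints hold with equality at $\rho=1$, and all four strategyproofness inequalities linking $P_t$ to $Q_t$ and $Q'_t$ are satisfied. Hence your $G(t,\rho)\le 2$ can never force $\rho>1$.

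The paper's proof uses a different deviation. On the symmetric profile ${\bf x}=(1-a,1+a)$ one shows (by a simple case split) that one agent---say agent~$1$---has expected distance at most~$1$ from the facility. Agent~$1$ is then deviated to the \emph{endpoint} $0$, giving ${\bf y}=(0,1+a)$. The approximation constraint on ${\bf y}$ (optimum $3-a$ at $y=2$) forces enough mass to the right of $1+a$ that agent~$1$'s expected distance from $1-a$ under $\pi_{\bf y}$ exceeds~$1$; this contradicts strategyproofness. The value $a=2\sqrt{3}-3$ makes the arithmetic tight at $2/\sqrt{3}$. The essential difference from your setup is that the auxiliary profile is asymmetric and the approximation constraint there directly controls where the mass sits relative to $1+a$, rather than only controlling a first moment that must then be converted across a $2t$-gap.
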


\begin{proof}
Let $f$ be a randomized strategyproof mechanism which provides an approximation ratio $c<\frac{2}{\sqrt{3}}$ for the maxisum objective. Consider the case where $N=\{1,2\}$ and $T=\{1\}$, and let $a=2\sqrt{3}-3$. Let ${\bf x}$ be the location profile in which $x_1=1-a$ and $x_2=1+a$. Assume without loss of generality that $P(f({\bf x})<x_1) \geq P(f({\bf x})>x_2)$. The expected distance of the facility from $x_1$ on this profile is at most $(1-a)P(f({\bf x})<x_1)+(1+a)P(f({\bf x})>x_2)+2aP(x_1 \leq f({\bf x}) \leq x_2)$; as $P(f({\bf x})<x_1) \geq P(f({\bf x})>x_2)$ and $2a \leq 1$, this implies that the expected distance of the facility from $x_1$ on profile ${\bf x}$ is at most $1$.\\

Let ${\bf y}$ be the profile in which $y_1=0$ and $y_2=1+a$. Let $b=E[f({\bf y})|f({\bf y})>y_2]-y_2$, and let $p=P(f({\bf x})>y_2)$. The mechanism's expected benefit is $1+a+2bp$, while the optimal cost is $3-a$. To maintain approximation ratio of $c$, we must have $1+a+2bp \geq \frac{3-a}{c}$, which implies $bp \geq \frac{3-a}{2c}-\frac{1+a}{2}$. Also, as $b \leq 1-a$, we have that $p \geq \frac{1}{1-a}(\frac{3-a}{2c}-\frac{1+a}{2})$. Now, the expected distance of the facility from $x_1$ on ${\bf y}$ is $(2a+b)p \geq (\frac{2a}{1-a}+1)(\frac{3-a}{2c}-\frac{1+a}{2}) > (\frac{2a}{1-a}+1)(\sqrt{3}\frac{3-a}{4}-\frac{1+a}{2})=1$. This violates strategyproofness, as agent 1 has an incentive to misreport his location to be $0$ when the location profile is ${\bf x}$.
\end{proof}

Next, we show a lower bound of $\frac{3}{2}$ for the egalitarian objective:

\begin{theorem}
No randomized strategyproof mechanism can provide an approximation ratio better than $\frac{3}{2}$ for the egalitarian objective, even when $T=\{1\}$.
\end{theorem}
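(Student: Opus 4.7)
My plan is to argue by contradiction, closely mirroring the structure of the maxisum lower bound proof (Theorem~6.1). Assume $f$ is a randomized strategyproof mechanism with approximation ratio $c<3/2$ for the egalitarian objective under $T=\{1\}$. I will exhibit a two-agent profile and a single-agent deviation whose combined constraints on $\pi$ violate strategyproofness.

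I would take $N=\{1,2\}$ and the symmetric profile ${\bf x}=(1-a,1+a)$ for a parameter $a\in(0,1/2)$ to be tuned, on which the egalitarian optimum equals $1-a$, attained at either endpoint of $[0,2]$. Writing $p_L,p_M,p_R$ for the probability masses under $\pi_{\bf x}$ on the regions $[0,1-a]$, $[1-a,1+a]$, $[1+a,2]$, the minimum-distance function is at most $1-a$ on the outer two regions and at most $a$ on the middle region, so the $c$-approximation guarantee forces $p_M \leq (1-a)(c-1)/(c(1-2a))$. Exactly as in the maxisum proof, I would then assume WLOG (by reversing the interval) that $P(f({\bf x})<x_1)\geq P(f({\bf x})>x_2)$, and combine this with the region-wise upper bounds $|Y-(1-a)|\leq 1-a,\;2a,\;1+a$ on the three regions to conclude $E_{\pi_{\bf x}}[|Y-(1-a)|]\leq 1$.

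For the deviation I would consider ${\bf y}=(0,1+a)$ (agent~1 misreports his location as $0$). The egalitarian optimum on ${\bf y}$ is $\max(1-a,(1+a)/2)$, equal to $1-a$ for $a<1/3$ and to $(1+a)/2$ for $a\geq 1/3$. Following the maxisum template, let $p=P(f({\bf y})>1+a)$ and $b=E[f({\bf y})\mid f({\bf y})>1+a]-(1+a)\leq 1-a$; the right-tail mass contributes $(2a+b)p$ to $E_{\pi_{\bf y}}[|Y-(1-a)|]$ via the identity $|y-(1-a)|=y-(1-a)$ for $y\geq 1+a$. I would upper bound the contributions from $[0,(1+a)/2]$ and $[(1+a)/2,1+a]$ using their regional maxima of the egalitarian kernel $h(y)=\min(y,|y-(1+a)|)$, and combine with the approximation guarantee $E_{\pi_{\bf y}}[h(Y)]\geq \mathrm{OPT}({\bf y})/c$ to derive a lower bound on $bp$ (and hence on $(2a+b)p$) analogous to the bound $bp\geq (3-a-c(1+a))/(2c)$ in the maxisum case. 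Strategyproofness for agent~1 then transfers this to $E_{\pi_{\bf x}}[|Y-(1-a)|]\geq (2a+b)p$, contradicting the upper bound $\leq 1$ whenever $(2a+b)p>1$; choosing $a$ at the critical value (analogous to $a=2\sqrt 3-3$ in the maxisum proof) makes this fail for every $c<3/2$.

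The main obstacle is that the egalitarian benefit is a minimum rather than a sum of linear functions, so the clean identity $E[\text{sum of utilities}]=1+a+2bp$ on ${\bf y}$ from the maxisum proof must be replaced by a more delicate piecewise decomposition of $E[h(Y)]$. In particular, the egalitarian optimum on ${\bf y}$ undergoes a phase transition at $a=1/3$ (between the endpoint maximizer $y=2$ and the interior maximizer $y=(1+a)/2$), so the analysis splits into two regimes, and the right-tail lower bound on $bp$ must absorb the contributions from the interior peak of $h$. Calibrating $a$ to balance the resulting inequalities against the upper bound $E_{\pi_{\bf x}}[|Y-(1-a)|]\leq 1$ should produce the tight constant $3/2$.
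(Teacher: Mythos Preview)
Your two-agent, single-deviation template cannot deliver the constant $\tfrac{3}{2}$ here, and the gap is precisely at the step you flag as the ``main obstacle.'' On the deviated profile ${\bf y}=(0,1+a)$ the egalitarian kernel $h(y)=\min(y,|y-(1+a)|)$ has its global maximum at the \emph{interior} point $y=(1+a)/2$, with value $(1+a)/2$. Consequently, a $c$-approximate mechanism is perfectly free to concentrate $\pi_{\bf y}$ near $(1+a)/2$: this already achieves ratio $\frac{2(1-a)}{1+a}$ on ${\bf y}$ (which tends to $1$ as $a\to 1/3^-$), yet it places the facility at distance only $|(1+a)/2-(1-a)|=(1-3a)/2$ from agent~1's true location. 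In particular one can have $p=P(f({\bf y})>1+a)=0$, so no nontrivial lower bound on $bp$ or on $(2a+b)p$ follows from the approximation guarantee alone, and $E_{\pi_{\bf y}}[|Y-(1-a)|]$ can be far below $1$. Concretely, taking $\pi_{\bf x}$ uniform on $\{0,2\}$ and $\pi_{\bf y}$ a point mass at $(1+a)/2$ satisfies all three of your constraints (approximation on ${\bf x}$, approximation on ${\bf y}$, and strategyproofness for agent~1) with ratio arbitrarily close to $1$ when $a$ is close to $1/3$. This is exactly where the egalitarian case diverges from the maxisum case: in Theorem~6.1 the maxisum optimum on ${\bf y}$ sits at the endpoint $y=2$, so a good approximation genuinely forces mass into $(1+a,2]$; for the egalitarian objective the optimum can sit in the interior, right next to $1-a$, and your lower bound on $bp$ evaporates.

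The paper's proof abandons the two-agent picture entirely. It works on a long interval $[0,M+2]$ with \emph{many} agents packed on an $\epsilon$-grid throughout $[1,M+1]$ (plus agents at the endpoints), so that any facility location in $[1,M+1]$ has egalitarian benefit at most $\epsilon/2$; this is what forces the mechanism to put almost all mass in $[0,1]\cup[M+1,M+2]$. One then compares with a second profile obtained by spreading the agents at $0$ across $[0,1)$ on the same grid, uses a chain of single-agent deviations (all from the point $0$) to propagate the strategyproofness inequality, and finally sends $M\to\infty$ to force a contradiction for every $c<3/2$. The crowding by many agents is essential: it is what kills the ``interior peak'' escape route that defeats your two-agent argument.
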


\begin{proof}
Assume $T=\{1\}$. Let $f$ be such a mechanism, with approximation ratio $c<\frac{3}{2}$. Let the endpoints be $0$ and $M+2$, where $M$ is some large number.  Consider the case with $n=2\lceil{}\frac{M+1}{\epsilon}\rceil{}+4$ agents, where $1>\epsilon>0$. Consider the profile ${\bf x}$ which locates one agent at $1$ and $M+1$, locates an agent in each $1+a\epsilon \in (1,M+1)$ s.t. $a \in \mathbb{N}$, and splits the rest of the agents evenly among the two endpoints (if there is an odd number of agents remaining, locate one agent at $\frac{M+2}{2}$). Note that an optimal facility location is at $M+1\frac{1}{2}$, with a benefit of $\frac{1}{2}$. Let $p$ be the probability that the facility is located at $[1,M+1]$. It follows that the resulting expected benefit is upper bounded by $\frac{\epsilon}{2} p+\frac{1}{2}(1-p)$. To get the required approximation ratio, we must have $\frac{\epsilon}{2} p+\frac{1}{2}(1-p) \geq \frac{1}{2c}$, which gives $p \leq \frac{\frac{1}{2}(1-\frac{1}{c})}{\frac{1}{2}-\frac{\epsilon}{2}}$. The facility must be located either in $[0,1]$ or in $[M+1,M+2]$ with probability at least $\frac{1-p}{2}$.  Assume without loss of generality that it is located with probability at least $\frac{1-p}{2}$ in $[0,1]$. Consider the profile ${\bf x}'$, which is obtained from profile ${\bf x}$ by relocating the agents from $0$ so that there is an agent in every point $a\epsilon \in [0,1)$ s.t. $a \in \mathbb{N}$.  Let $p'$ be the probability that on this profile, the facility is located at $[0,M+1]$. Note that the optimal facility location remains $M+1\frac{1}{2}$ with benefit $\frac{1}{2}$, and on the other hand the expected benefit on this profile is bounded by $\frac{\epsilon}{2}p'+\frac{1}{2}(1-p')$, yielding the bound $p' \leq \frac{\frac{1}{2}(1-\frac{1}{c})}{\frac{1}{2}-\frac{\epsilon}{2}}$. Let us analyze the expected distance of the facility from $0$ in the two profiles. For ${\bf x}$, the expected distance from $0$ is no more than $\frac{1-p}{2}+p(M+1)+\frac{1-p}{2}(M+2)$. On the other hand, for ${\bf x}'$, the expected distance from $0$ is no less than $(1-p')(M+1)$. Since we have obtained ${\bf x}'$ from ${\bf x}$ using deviations of agents from $0$, strategyproofness dictates $\frac{1-p}{2}+p(M+1)+\frac{1-p}{2}(M+2) \geq (1-p')(M+1)$ \footnote{Consider the sequence of profiles ${\bf x^0}$ through ${\bf x^n}$, such that profile ${\bf x^i}$ agrees with ${\bf x'}$ on the location of agents $1$ through $i$ and with ${\bf x}$ on the location of the rest of the agents. Let $i^*$ be the index that maximizes $E_{Y \sim f({\bf x^i})}[Y]$; if there is more than one such index, choose the minimal one. If $i^*>0$, then it is beneficial for agent $i^*$ to deviate so that the profile changes from ${\bf x^{i^*-1}}$ to ${\bf x^{i^*}}$, violating strategyproofness. Thus $i^*=0$, implying the expected distance of the facility from $0$ in those profiles satisfies $E_{Y \sim f({\bf x^0})}[Y] \geq E_{Y \sim f({\bf x^n})}[Y]$. But, since ${\bf x^0}={\bf x}$ and ${\bf x^n}={\bf x}'$, we know that $\frac{1-p}{2}+p(M+1)+\frac{1-p}{2}(M+2) \geq E_{Y \sim f({\bf x})}[Y]$ and $E_{Y \sim f({\bf x'})}[Y] \geq (1-p')(M+1)$.}. Reorganizing this, we get: $\frac{3-p}{2}+\frac{1+p}{2}M \geq (1-p')M+1-p'$. Using our bounds for $p$ and $p'$, this implies the inequality $\frac{1}{2}+(\frac{1}{2}+\frac{\frac{1}{2}(1-\frac{1}{c})}{1-\epsilon})M \geq (1-\frac{\frac{1}{2}(1-\frac{1}{c})}{\frac{1}{2}-\frac{\epsilon}{2}})M-\frac{\frac{1}{2}(1-\frac{1}{c})}{\frac{1}{2}-\frac{\epsilon}{2}}$. Let us reorganize this inequality to $\frac{1}{2}+\frac{\frac{1}{2}(1-\frac{1}{c})}{\frac{1}{2}-\frac{\epsilon}{2}} \geq (\frac{1}{2}-\frac{\frac{3}{2}(1-\frac{1}{c})}{1-\epsilon})M$. As $c<\frac{3}{2}$, we can choose $\epsilon>0$ small enough so that $\frac{1}{2}-\frac{\frac{3}{2}(1-\frac{1}{c})}{1-\epsilon}>0$. Sending $M$ to $\infty$ then causes the r.h.s of the inequality to go to $\infty$, violating the inequality.
\end{proof}

\section{Mutiple Locations Per Agent in the Obnoxious Model}

In this section we follow the spirit of a suggestion in \cite{DBLP:journals/teco/ProcacciaT13} and study a generalized model, in which a single agent may be associated with more than one location. As this multiple location model is a generalization of our previous model, the lower bounds carry over; in particular, for the maxisum objective, we have lower bounds of $3$ and $\frac{2}{\sqrt{3}}$ on deterministic and randomized mechanisms respectively, even when $T=\{1\}$. We show that when $T=\{1\}$, in the deterministic case, we can find a strategyproof mechanism to match the lower bound, despite the additional power given to the agents. In addition, still when $T=\{1\}$, we provide a family of $\frac{3}{2}$- approximate randomized strategyproof mechanisms.

Our generalized model (for the definition of the model, we do not assume $T=\{1\}$) can be obtained from our previous model via the following changes. First, let ${\bf k}=(k_1, \ldots, k_n) \in \mathbb{N}^n$. A location profile is now ${\bf z}=({\bf z^1},{\bf z^2},\ldots,{\bf z^n})$, where for each $i=1,...,n$, ${\bf z^i}=(z^i_1,z^i_2,\ldots,z^i_{k_i}) \in I^{k_i}$. A deterministic mechanism is a collection of functions $f=\{f_n^{\bf k}:n \in \mathbb{N}, {\bf k} \in \mathbb{N}^n\}$, such that $f_n^{\bf k}:T^n \times I^{k_1} \times \ldots \times I^{k_n} \rightarrow I$ is a function that maps each pair of type and location profiles to a facility location. The benefit of agent $i$ from facility location $y$ is now defined as $B(\theta_i,{\bf z^i},y)=\sum_{j=1}^{k_i} B(\theta_i,z^i_j,y)$, where $B(\theta_i,x,y)$ is $|x-y|$ when $\theta_i=1$ and $2-|x-y|$ when $\theta_i=2$. The maxisum objective is $\sum_{i=1}^n B(\theta_i,{\bf z^i},y)$ as usual. The rest of the notation carries over, and the adjustment to the randomized model is easy and left to the reader. For the approximation ratio, we note that the possible instances of the problem include all possible options for both $n$ and ${\bf k}$.

First, we provide a $3$- approximate strategyproof deterministic mechanism for the case where $T=\{1\}$.

\begin{theorem}
Let $R^*=\{i: \frac{\sum_{j=1}^{k_i} z_j}{k_i} \leq 1\}$, $L^*=\{i:\frac{\sum_{j=1}^{k_i} z_j}{k_i}>1\}$. Let $f$ be the mechanism which locates the facility at $2$ if $\sum_{i \in R^*}k_i \geq \sum_{i \in L^*}k_i $ and at $0$ otherwise. This mechanism is strategyproof and $3$- approximate for the maxisum objective when $T=\{1\}$.
\end{theorem}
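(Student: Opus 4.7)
The plan is to follow the template of the proof of Theorem 3.1 but to exploit a key observation specific to the obnoxious ($T=\{1\}$) setting: the objective $sb(\mathbf{z},a) = \sum_{i=1}^n \sum_{j=1}^{k_i} |z_j^i - a|$ is a sum of absolute values in $a$, hence convex on $I=[0,2]$. Consequently its maximum on $I$ is attained at $a\in\{0,2\}$, and it suffices to bound the ratio $sb(\mathbf{z},a)/sb(\mathbf{z},f(\mathbf{z}))$ only at the two endpoints. I will denote $K_R = \sum_{i\in R^*} k_i$, $K_L = \sum_{i\in L^*} k_i$, and $\bar z_i = \frac{1}{k_i}\sum_{j=1}^{k_i} z_j^i$, so that $i\in R^*$ iff $\bar z_i\le 1$.

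For strategyproofness I would first note that $f$ outputs only one of two values, $0$ or $2$, so the only deviation that could possibly benefit agent $i$ is one that flips the chosen endpoint. Assume without loss of generality that $f(\mathbf{z}) = 2$, so that $K_R \ge K_L$, and consider an agent $i$ who strictly prefers $0$; this forces $\bar z_i > 1$, i.e.\ $i\in L^*$. Any report by $i$ changes only the membership of $i$ itself in $R^*$ vs.\ $L^*$; it affects no other agent's classification. If $i$ stays in $L^*$ the counts are unchanged, while if $i$ crosses into $R^*$ then $K_R$ increases by $k_i$ and $K_L$ decreases by $k_i$, so $K_R\ge K_L$ is preserved. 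Either way the facility is placed at $2$, so $i$ gains nothing by misreporting.

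For the approximation ratio, still assuming $f(\mathbf{z})=2$, the mechanism's benefit is $sb_2 = \sum_i k_i(2-\bar z_i)$. For each $i\in R^*$ we have $2-\bar z_i \ge 1$, so $k_i(2-\bar z_i)\ge k_i$, and summing over $i\in R^*$ gives the lower bound $sb_2 \ge K_R$. The benefit at $a=0$ is $sb_0 = \sum_i k_i \bar z_i$. Using $\bar z_i \le 1$ for $i\in R^*$ and the trivial bound $\bar z_i \le 2$ for $i\in L^*$, we get $sb_0 \le K_R + 2K_L \le 3K_R$, where the last inequality uses $K_L\le K_R$. Hence $sb_0/sb_2 \le 3$, and since $sb(\mathbf{z},a)\le \max\{sb_0,sb_2\}$ by convexity, we conclude that the approximation ratio is at most $3$. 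The case $f(\mathbf{z})=0$ is symmetric.

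There is essentially no hard step here: the only subtle point is recognizing that in the multiple-locations setting the ``weight'' of each agent in the tally is $k_i$, and that the inequalities $sb_2\ge K_R$ and $sb_0 \le K_R + 2K_L$ are tight enough exactly because an agent in $R^*$ contributes between $k_i$ and $2k_i$ to $sb_2$ (and at most $k_i$ to $sb_0$), while an agent in $L^*$ does the reverse. Once the convexity reduction to endpoints is in place, the rest is a two-line weighted count.
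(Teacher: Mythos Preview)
Your proof is correct and follows essentially the same line as the paper's: reduce to the two endpoints (you justify this by convexity; the paper just says the optimum is ``clearly'' in $\{0,2\}$), then bound $sb(\mathbf{z},0)\le K_R+2K_L$ and $sb(\mathbf{z},2)\ge K_R$ using $\bar z_i\le 1$ for $i\in R^*$ and the trivial bound for $i\in L^*$, and finish with $K_L\le K_R$. Your strategyproofness argument is also the same idea, just spelled out in slightly more detail.
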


\begin{proof}
Strategyproofness is easy (note that $R^*$ is exactly the set of agents weakly preferring the facility to be located at $2$ over $0$). The optimal facility location is clearly in $\{0,2\}$. Assume without loss of generality that $f({\bf z})=2$. All we have to prove is that $\frac{sb({\bf z},0)}{sb({\bf z},2)} \leq 3$. But note that every agent $i \in R^*$ receives a benefit of at least $k_i$ when the facility is located at $2$ and at most $k_i$ when the facility is located at $0$. On the other hand, each agent $i \in L^*$ trivially gets a benefit between $0$ and $2k_i$. Thus, using the fact that $f({\bf z})=2$ implies $\sum_{i \in R^*}k_i \geq \sum_{i \in L^*}k_i$, we get $\frac{sb({\bf z},0)}{sb({\bf z},2)} \leq \frac{2\sum_{i \in L^*}k_i+\sum_{i \in R^*}k_i}{\sum_{i \in R^*}k_i} \leq \frac{2\sum_{i \in R^*}k_i+\sum_{i \in R^*}k_i}{\sum_{i \in R^*}k_i}=3$.
\end{proof}

Note that when $k_i=1$ for all $i$, this mechanism reduces to the mechanism proposed in \cite{DBLP:journals/tcs/ChengYZ13}.

Finally, we define a class of randomized strategyproof mechanisms that provide a $\frac{3}{2}$- approximation ratio when $T=\{1\}$ and show that it is nonempty.

\begin{theorem}
Let $f$ be a randomized mechanism that, for a profile ${\bf z}$, locates the facility at $0$ with probability $p_{\bf z}$ and at $2$ with probability $(1-p_{\bf z})$.  Then, when $T=\{1\}$, the following conditions on $p_{\bf z}$ are sufficient to make the mechanism strategyproof and $\frac{3}{2}$- approximate:
\begin{enumerate}
\item $p_{\bf z}$ is increasing in $\sum_{i \in L^*}k_i$ and decreasing in $\sum_{i \in R^*}k_i$.
\item $\frac{1}{3}+\frac{1}{6} \cdot \frac{\sum_{i \in L^*}k_i}{\sum_{i \in R^*}k_i} \geq p_{\bf z} \geq \frac{2}{3}-\frac{1}{6}\cdot\frac{\sum_{i \in R^*}k_i}{\sum_{i \in L^*}k_i}$ (if $\sum_{i \in R^*}k_i=0$, the leftmost term is $\infty$; if $\sum_{i \in L^*}k_i=0$, the rightmost term is $-\infty$).
\end{enumerate}
Furthermore, the class of mechanisms of this form is nonempty.
\end{theorem}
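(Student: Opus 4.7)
\emph{Strategyproofness.} Since the facility lands only at $0$ or $2$, agent $i$'s expected benefit as a function of $p := p_{\bf z}$ is $p\sum_j z^i_j + (1-p)(2k_i - \sum_j z^i_j)$, which is affine in $p$ with slope $2k_i(\bar z_i - 1)$ where $\bar z_i = \tfrac{1}{k_i}\sum_j z^i_j$. Thus an $R^*$-agent (with $\bar z_i \le 1$) weakly prefers smaller $p$, an $L^*$-agent (with $\bar z_i > 1$) strictly prefers larger $p$, and the $\bar z_i = 1$ case is indifferent. I would then argue that any deviation either (a) keeps the agent on the same side of the midpoint $1$, leaving both $\sum_{i\in L^*}k_i$ and $\sum_{i\in R^*}k_i$ unchanged and hence $p_{\bf z}$ unchanged (applying Condition~1 in both directions), or (b) moves the agent across the midpoint, transferring $k_i$ from one sum to the other, so Condition~1 forces $p_{\bf z}$ to move against that agent's preference. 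Hence truth-telling is weakly dominant.

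\emph{Approximation ratio.} Put $S_0 := \sum_{i,j} z^i_j$, $K := \sum_i k_i$, $L := \sum_{i\in L^*} k_i$, $R := \sum_{i\in R^*} k_i$. Convexity of $\sum_{i,j}|z^i_j - y|$ in $y$ places the optimum at $y \in \{0,2\}$, yielding $\mathrm{OPT} = \max(S_0,\, 2K - S_0)$, while the mechanism's expected benefit is $M = p_{\bf z} S_0 + (1-p_{\bf z})(2K - S_0)$. I would handle $S_0 \le K$ (optimum at $y = 2$) first: $\mathrm{OPT}/M \le \tfrac{3}{2}$ rearranges (for $S_0 < K$) to $p_{\bf z} \le \tfrac{2K - S_0}{6(K - S_0)}$. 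The right-hand side is strictly increasing in $S_0$, and because each $i \in L^*$ satisfies $\sum_j z^i_j > k_i$ \emph{strictly}, we have $S_0 \ge L$; the infimum as $S_0 \to L^+$ equals $\tfrac{2K - L}{6(K - L)} = \tfrac{1}{3} + \tfrac{L}{6R}$, exactly the upper bound in Condition~2. The case $S_0 > K$ is the mirror image (swap $L \leftrightarrow R$ and $p_{\bf z} \leftrightarrow 1 - p_{\bf z}$), producing the lower bound in Condition~2.

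\emph{Nonemptiness.} The inequality $\tfrac{1}{3} + \tfrac{L}{6R} \ge \tfrac{2}{3} - \tfrac{R}{6L}$ rearranges to $L^2 + R^2 \ge 2LR$, so the admissible interval of Condition~2 is always nonempty. I would then exhibit the concrete choice $p_{\bf z} := \min\bigl(1,\, \tfrac{1}{3} + \tfrac{L}{6R}\bigr)$ (with $p_{\bf z} := 1$ when $R = 0$) and check directly that it is monotone non-decreasing in $L$, non-increasing in $R$, lies $\le \tfrac{1}{3} + \tfrac{L}{6R}$ by construction, and lies $\ge \tfrac{2}{3} - \tfrac{R}{6L}$ because the latter is $\le 1$ and (when the clipping is inactive) the nonemptiness inequality applies.

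\emph{Main obstacle.} The only delicate point is recognizing that the worst case in the approximation argument is driven to $S_0 \to L^+$ rather than $S_0 = 0$: this is where the strict inequality $\bar z_i > 1$ built into the definition of $L^*$ becomes essential, and it is precisely this which makes the bound $\tfrac{1}{3} + \tfrac{L}{6R}$ in Condition~2 exactly tight. The rest of the proof is bookkeeping with the affine form of each agent's utility and routine algebraic rearrangement.
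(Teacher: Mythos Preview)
Your proof is correct and follows essentially the same route as the paper. The paper argues the approximation bound by separately lower-bounding the mechanism's benefit at the suboptimal endpoint (by $\sum_{i\in R^*}k_i$ or $\sum_{i\in L^*}k_i$) and upper-bounding $OPT$ (by $2L+R$ or $2R+L$); your parameterization by $S_0$ packages both bounds into the single constraint $L \le S_0 \le 2L+R$ and then uses monotonicity of $\tfrac{2K-S_0}{6(K-S_0)}$, which is the same content in slightly slicker form. For nonemptiness, the paper exhibits the mirror-image mechanism $p_{\bf z}=\max\{0,\tfrac{2}{3}-\tfrac{R}{6L}\}$ rather than your $\min\{1,\tfrac{1}{3}+\tfrac{L}{6R}\}$, but both rely on the same inequality $L^2+R^2\ge 2LR$.
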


\begin{proof}
Strategyproofness is clear. Fix ${\bf z}$, and set $p=p_{\bf z}$. For the approximation ratio, there are two cases to consider. First, assume that the optimal facility location for profile ${\bf z}$ is $0$, with social benefit $OPT$. If $0$ and $2$ are both optimal, clearly any choice of $p$ yields approximation ratio $1$. Assume $2$ is not optimal; then $\sum_{i \in L^*}k_i>0$. As for every $i \in R^*$ we have that $\frac{\sum_{j=1}^{k_i} z_j}{k_i} \leq 1$, his benefit from locating the facility at $2$ is at least $k_i$, and so the social benefit from locating the facility there is at least $\sum_{i \in R^*}k_i$. Thus, it is enough to prove that $pOPT+(1-p)\sum_{i \in R^*}k_i \geq \frac{2}{3}OPT$, or equivalently $(1-p)\sum_{i \in R^*}k_i \geq (\frac{2}{3}-p)OPT$. If the right hand side is negative, then this inequality is satisfied. Assume that the right hand side is nonnegative. Note that $OPT \leq 2\sum_{i \in L^*}k_i+\sum_{i \in R^*}k_i$ (the benefit of $i \in R^*$ from locating the facility at $0$ is bounded by $k_i$, while the benefit of $i \in L^*$ is trivially bounded by $2k_i$). Thus, it is enough to prove that $(1-p)\sum_{i \in R^*}k_i \geq (\frac{2}{3}-p)(2\sum_{i \in L^*}k_i+\sum_{i \in R^*}k_i)$. Isolating $p$ in this inequality gives $p \geq \frac{2}{3}-\frac{1}{6}\frac{\sum_{i \in R^*}k_i}{\sum_{i \in L^*}k_i}$, which is satisfied.\\

On the other hand, assume that the optimal facility location for profile ${\bf z}$ is $2$; note that this implies $\sum_{i \in R^*}k_i>0$. Similarly to the analysis above, we can get a lower bound of $\sum_{i \in L^*}k_i$ on the benefit of locating the facility at $0$ and upper bound of $2\sum_{i \in R^*}k_i+\sum_{i \in L^*}k_i$ on $OPT$. Thus, we need that $(1-p)OPT + p\sum_{i \in L^*}k_i \geq \frac{2}{3}OPT$,  and so it is enough to verify that $p\sum_{i \in L^*}k_i \geq (p-\frac{1}{3})(2\sum_{i \in R^*}k_i+\sum_{i \in L^*}k_i)$. Isolating $p$ yields $p \leq \frac{1}{3}+\frac{1}{6}\frac{\sum_{i \in L^*}k_i}{\sum_{i \in R^*}k_i}$.\\

Finally, we verify that $p=\max{\{\frac{2}{3}-\frac{1}{6}\cdot\frac{\sum_{i \in R^*}k_i}{\sum_{i \in L^*}k_i},0\}}$ (where if $\sum_{i \in L^*}k_i=0$, $p=0$) satisfies the above properties. The only thing that requires proof is $p \leq \frac{1}{3}+\frac{1}{6} \cdot \frac{\sum_{i \in L^*}k_i}{\sum_{i \in R^*}k_i}$ (assuming $\sum_{i \in R^*}k_i>0$; if $\sum_{i \in R^*}k_i=0$ then there is nothing to prove). Note that the right hand side is positive, so it is enough to show is that $\frac{1}{3}+\frac{1}{6} \cdot \frac{\sum_{i \in L^*}k_i}{\sum_{i \in R^*}k_i} \geq \frac{2}{3}-\frac{1}{6}\cdot\frac{\sum_{i \in R^*}k_i}{\sum_{i \in L^*}k_i}$ when $\sum_{i \in L^*}k_i>0$. But this is equivalent to $\frac{(\sum_{i \in L^*}k_i)^2+(\sum_{i \in R^*}k_i)^2}{(\sum_{i \in L^*}k_i)(\sum_{i \in R^*}k_i)} \geq 2$, and note that $\frac{(\sum_{i \in L^*}k_i)^2+(\sum_{i \in R^*}k_i)^2}{(\sum_{i \in L^*}k_i)(\sum_{i \in R^*}k_i)}=2+\frac{(\sum_{i \in L^*}k_i-\sum_{i \in R^*}k_i)^2}{(\sum_{i \in L^*}k_i)(\sum_{i \in R^*}k_i)} \geq 2$.
\end{proof}

It is worth noting that the randomized mechanism given in \cite{DBLP:journals/tcs/ChengYZ13}, for the special case of $k_i=1$ for all $i$, falls into the category of mechanisms we defined here.

\section{Future Research}

There are many additional possible directions for this research. The immediate question stemming from our results is what further improvement can be achieved in the approximation ratio under the hybrid model by using randomization. Another interesting question is whether it is possible to derive a clear characterization of randomized strategyproof mechanisms. Other directions include characterization and bounds for topologies different than the interval, and for objectives other than maxisum and egalitarian.

\bibliography{BIBLIOGRAPHY}
\bibliographystyle{plain}

\end{document}